\newtheorem{theorem}{Theorem}[section]
\newenvironment{tight_item}{
\begin{itemize}[noitemsep,topsep=0pt,partopsep=0pt, parsep=0pt, leftmargin=1.2em]
  \renewcommand{\labelitemi}{\raise .5ex\hbox{\tiny$\bullet$}}
}{\end{itemize}}
\DeclareMathOperator*{\argmin}{argmin}
\newcommand{\rooti}{r_i}
\newcommand{\bag}{b}
\newcommand{\childi}{c_i}
\newcommand{\bicleft}{A}
\newcommand{\bicright}{B}
\newcommand{\blueprint}{P}
\newcommand{\blueprintspc}{P^*}
\newcommand{\biclique}{\ensuremath{\bicleft \times \bicright}}
\newcommand{\indfrom}{\mathcal{I}}
\newcommand{\ccto}{\mathcal{C}}
\newcommand{\order}{\phi}
\newcommand{\OCTMIB}{\texttt{OCT-MIB}\xspace}
\newcommand{\imbea}{\texttt{iMBEA}\xspace}
\newcommand{\mcb}{\texttt{MCB}\xspace}
\newcommand{\MCB}{\mcb}
\newcommand{\mis}{\texttt{MIS}\xspace}
\newcommand{\dias}{\texttt{Dias}\xspace}
\newcommand{\moddias}{\texttt{LexMIB}\xspace}
\newcommand{\phasestyle}[1]{\textbf{\textit{#1}}}
\newcommand{\AMBpfull}{\textsc{Maximal Induced Biclique Enumeration}\xspace}
\newcommand{\MCBpfull}{\textsc{Maximal Crossing Biclique Enumeration}\xspace}
\newcommand{\algoref}[1]{\hyperref[#1]{Algorithm~\ref*{#1}}}
\newenvironment{tightcenter}
 {\parskip=0pt\par\nopagebreak\centering}
 {\par\noindent\ignorespacesafterend}
\newlength{\RoundedBoxWidth}
\newsavebox{\GrayRoundedBox}
\newenvironment{GrayBox}[1]%
{\setlength{\RoundedBoxWidth}{\linewidth-4.5ex}
\def\boxheading{#1}
\begin{lrbox}{\GrayRoundedBox}
\begin{minipage}{\RoundedBoxWidth}%
}{%
\end{minipage}
\end{lrbox}%
\begin{tightcenter}%
\begin{tikzpicture}%
\node(Text)[draw=black!20,fill=white,rounded corners,%
inner sep=2ex,text width=\RoundedBoxWidth]%
{\usebox{\GrayRoundedBox}};
\coordinate(x) at (current bounding box.north west);
\node [draw=white,rectangle,inner sep=3pt,anchor=north west,fill=white]
at ($(x)+(10.5pt,.75em)$) {\boxheading};
\end{tikzpicture}
\end{tightcenter}\vspace{0pt}%
\ignorespacesafterend
}
\newenvironment{problem}[1]{\noindent\ignorespaces%
\FrameSep=6pt%
\parindent=0pt%
\vspace*{.5em}
\begin{GrayBox}{\textsc{#1}}%
\newcommand\Input{Input:}%
\newcommand\Prob{Output:}%
\begin{tabular*}{\columnwidth}{@{\hspace{.25em}} >{\itshape} p{1.1cm} p{0.8\columnwidth} @{}}%
}{
\end{tabular*}%
\end{GrayBox}%
\vspace*{-.5em}
\ignorespacesafterend
}
\begin{document}

\title{\Large Mining Maximal Induced Bicliques using Odd Cycle Transversals
\thanks{This work was supported by the Gordon \& Betty Moore Foundation's Data-Driven Discovery Initiative
    under Grant GBMF4560 to Blair D. Sullivan.}}
\author{Kyle Kloster \thanks{North Carolina State University.} \\
Blair D. Sullivan\footnotemark[2] \\
\and
Andrew van der Poel\footnotemark[2]}
\date{}

\maketitle

\begin{abstract}
    \small\baselineskip=9pt

    Many common graph data mining tasks take the form of identifying dense subgraphs (e.g. clustering, clique-finding, etc). In biological applications, the natural model for these dense substructures
is often a complete bipartite graph (biclique), and the problem requires enumerating all maximal bicliques (instead of identifying just the largest or densest). The best known algorithm in general graphs is
due to Dias et al., and runs in time $O(M |V|^4)$, where $M$ is the number of \emph{maximal induced bicliques} (MIBs) in the graph. When the graph being searched is itself bipartite, Zhang et al. give a faster algorithm where the time per MIB depends on the number of edges in the graph. In this work, we present a new algorithm for enumerating MIBs in general graphs, whose run time depends on how ``close to bipartite'' the input is. Specifically, the runtime is parameterized by the size $k$ of an odd cycle transversal (OCT), a vertex set whose deletion results in a bipartite graph. Our algorithm runs in time $O( M |V| |E| k^2 3^{k/3} )$, which is an improvement on Dias et al. whenever $ k\leq 3 \log_3 |V|$. We implement our algorithm alongside a variant of Dias et al.'s in open-source C++ code, and experimentally verify that the OCT-based approach is faster in practice on graphs with a wide variety of sizes, densities, and OCT decompositions.\\

\noindent\textbf{Keywords:} bicliques, odd cycle transversal, parameterized algorithms, enumeration, bipartite

\end{abstract}

\section{Introduction}\label{sec:introduction}

Bicliques (complete bipartite graphs) naturally arise in many data mining applications, including detecting cyber communities~\cite{KUMAR}, data compression~\cite{AGARWAL}, epidemiology~\cite{MUSHLIN}, artificial intelligence~\cite{WILLE}, and gene co-expression analysis~\cite{KAYTOUE, KAYTOUE2}. In many
settings, the bicliques of interest are \emph{maximal} (not contained in any larger biclique) and/or
\emph{induced} (each side of the bipartition is independent in the host graph), and there
is a large body of literature~\cite{ALEXE,DIAS,EPPSTEIN,LI,MAKINO,MUSHLIN,SANDERSON,ZHANG}
giving algorithms for enumerating all such subgraphs. Many of these approaches make strong structural
assumptions on the host graph; the case when the host graph is bipartite has been particularly
well-studied, and the \imbea algorithm of Zhang et al. has been empirically established to be state-of-the-art~\cite{ZHANG}. In general graphs, the only known non-trivial algorithm for enumerating maximal induced bicliques (MIBs) is that of Dias et al.~\cite{DIAS}.

We consider the problem of efficiently enumerating all MIBs in general graphs.
In particular, we design an algorithm \OCTMIB that extends ideas from \imbea to work on non-bipartite graphs by using an \emph{odd cycle transversal} (OCT set): a set of nodes $O$ such that $G[V\setminus O]$ is bipartite. We prove that our algorithm has runtime $O(M n m \cdot n_O^2 \cdot 3^{n_O/3})$ where $n_O = |O|$, $M$ is the number of MIBs in the graph, and $n$ and $m$ denote the number of vertices and edges in the graph, respectively.
This is asymptotically faster than the approach of Dias et al. whenever the OCT set has size $n_O \leq 3 \log_3 (n)$. Since all graphs have OCT sets (although they can be size $O(n)$, as in cliques),
our algorithm can be run in the general case; its correctness does not require minimality or
optimality of the OCT set.\looseness-1

We also present several additional algorithms which may be of independent interest. The first is \moddias,
a modification of the algorithm of Dias et al.~\cite{DIAS}, which addresses a flaw in the original method
and enumerates all MIBs in time $O(M n^4)$.
The second is for a variant of biclique enumeration where we are only interested in bicliques with one part inside a specified independent set of the host graph. Specifically, given a graph $G = (V,E)$
with $V$ partitioned into $X\cup Y$ with $X$ an independent set, we wish to enumerate all \emph{maximal crossing bicliques} (MCBs) $A \times B$ with $A \subseteq X$, $B \subseteq Y$. We give an algorithm \mcb which enumerates all maximal crossing bicliques in time $O(|X||Y|m)$ per MCB.

Further, we implement both \OCTMIB and \moddias in open source C++ code, and evaluate their performance on a suite of synthetic graphs with known OCT decompositions. Our experiments show that \OCTMIB is generally at least an order of magnitude faster than \moddias, and verify that in practice,
our parameterized approach yields the best performance even when the distance to bipartite (OCT set size) is $\Omega(\log(n))$, far exceeding the constant values typically required by such algorithms.

We begin with preliminaries and a brief discussion of related work, then describe the flaw from the original method in~\cite{DIAS} along with our corrected version in Section~\ref{sec:dias}. We outline our
main algorithm \OCTMIB in Section~\ref{sec:algorithms}, and state the correctness and complexity for \OCTMIB in Section~\ref{sec:theory}. Finally, we present our experimental evaluation in Section~\ref{sec:experiments}.

In Appendix~\ref{app:zhang} we begin by identifying a typo in the runtime of \imbea from~\cite{ZHANG} and show that its runtime is $O(Mmn)$, where $M$ is the number of MIBs. We include detailed descriptions of the mechanics of \OCTMIB and \mcb, along with proofs of their correctness and runtime in Appendices~\ref{app:octmib} and~\ref{app:mcb}, respectively. 
In Appendix~\ref{app:data} we describe our random graph generator and the hardware we use for our experiments, along with an additional experiment comparing \OCTMIB and \moddias.\looseness-1

\section{Preliminaries}

\subsection{Related work} \label{sec:related-work}

The complexity of finding bicliques is well-studied, beginning with the results
of Garey and Johnson~\cite{GAREY} which establish that in bipartite graphs, finding the largest balanced biclique is NP-hard but the largest biclique (number of vertices) can be found in polynomial time. Finding the biclique with the largest number of edges was shown to be NP-complete in general graphs~\cite{YANNAKAKIS}, but the case of bipartite graphs remained open for many years. Several variants (including the weighted version) were proven NP-complete in~\cite{DAWANDE}, and in 2000, Peeters finally resolved the problem, proving the edge maximization variant is NP-complete in bipartite graphs~\cite{PEETERS}. Particularly relevant to the mining setting, Kuznetsov showed that enumerating maximal bicliques in a bipartite graph is \#P-complete~\cite{KUZNETSOV}, the NP-completeness analogue for counting problems~\cite{VALIANT}.

For the problem of enumerating maximal \emph{induced} bicliques, the best known algorithm in general graphs is due to Dias et al.~\cite{DIAS}; in the non-induced setting, other approaches include a consensus algorithm~\cite{ALEXE}, an efficient algorithm for small arboricity~\cite{EPPSTEIN}, and a general framework for enumerating maximal cliques and bicliques~\cite{GELY}.
We note that, as described, the method in~\cite{DIAS} may fail to enumerate all MIBs; we describe a graph eliciting this behaviour, along with a modified algorithm (\moddias) with proof of correctness in Section~\ref{sec:dias}.
We note that our correction increases the runtime of the approach from $O(n^3)$ to $O(n^4)$ per MIB.\looseness-1

There has also been significant work on enumerating MIBs in bipartite graphs.
We note that since all bicliques in a bipartite graph are necessarily induced, non-induced solvers for general graphs (such as~\cite{ALEXE}) can be applied, and have been quite competitive.
The best known approach, however, is an algorithm due to Zhang et al.~\cite{ZHANG} that directly exploits the bipartite structure\footnote{Due to a typo in their runtime (see Section~\ref{app:zhang}), the worst-case complexity of this algorithm is not an improvement on the  $O(n^2)$ time per MIB of several other approaches. However, in practice, the experimental results in~\cite{ZHANG} support this being faster than~\cite{ALEXE}.}.
This algorithm works by iterating over one partition of the bipartite graph and iteratively building maximal bicliques. They maintain special subsets of the iterated-over partition for each biclique and leverage the structure implied by bipartite graphs to efficently enumerate all of the maximal bicliques.
Other approaches in the bipartite setting include frequent closed itemset mining~\cite{LI} and transformations to the maximal clique problem~\cite{MAKINO}; faster algorithms are known when a lower bound on the size of bicliques to be enumerated is assumed~\cite{MUSHLIN, SANDERSON}.\looseness-1

In this work, we extend techniques for bipartite graphs to the general setting using
odd cycle transversals, a form of ``near-bipartiteness'' which arises naturally in many applications~\cite{GULPINAR,PANCONESI,SCHROOK}.  Although finding a minimum size
OCT set is NP-hard, the problem of deciding if an OCT set with size $k$ exists is fixed parameter tractable (FPT), with algorithms in~\cite{LOKSHTANOV} and~\cite{IWATA} running in times $O(3^k k m n)$ and $O(4^k n)$, respectively.
Although it is possible for graphs to have minimum OCT sets as large as $O(n)$ (as in cliques, for example),
in practice many graphs from common problem domains have OCT sets of size at most 40~\cite{HUFFNER}.

Other algorithms for OCT include a $O(\sqrt{log(n)})$-approximation~\cite{AGARWAL2},  a randomized polynomial kernelization algorithm based on matroids~\cite{KRATSCH}, and a subexponential algorithm for planar graphs~\cite{LOKSHTANOV2}.
Since our algorithm only requires a valid OCT set (not a minimal or optimal one), any of these approaches or one of several high-performing heuristics may be used to pre-process the data. Recent implementations~\cite{GOODRICH} of a heuristic ensemble alongside algorithms from ~\cite{AKIBA,HUFFNER} alleviate concerns about finding an OCT decomposition creating a barrier to usability for our algorithm.

\subsection{Notation and terminology}

Let $G = (V,E)$ be a graph. We denote $n = |V|$ and $m = |E|$, and use $N(v)$ to represent the neighborhood of a node $v \in V$. An independent set $T$ in $G$ is a \emph{maximal independent set} (MIS) if $T$ is not contained in any other independent set of $G$. We use $\indfrom(S)$ to denote all nodes which are independent from a set $S$ and $\ccto(S)$ to denote all nodes which are completely connected to a set $S$.

A biclique $\biclique$ in a graph $G = (V,E)$ consists of disjoint sets $A, B \subset V$ such that every vertex of $A$ is connected to every vertex of $B$.
We say a biclique $\biclique$ is \emph{induced} if both $A$ and $B$ are independent sets in $G$.
A biclique is \emph{maximal} in $G$ if no biclique in $G$ properly contains it.
Given a fixed independent set $X \subseteq V$, we define a \emph{crossing} biclique with respect to $X$ to be an induced biclique $\biclique$ such that $A \subseteq X$ and $B \subseteq V\setminus X$.

If $G$ is bipartite, we write $G[L,R]$, where the vertices are partitioned as $V  = L \cup R$ and we refer to the two partitions as the left and right ``sides'' of the graph.
For a biclique $\biclique$ in $G[L,R]$, by convention we list the ``left'' set first, i.e. $A \subseteq L$ and  $B \subseteq R$.

If $G$ has OCT set $O$, we denote the corresponding OCT decomposition of $G$ by $G[L,R,O]$, where the induced subgraph $G[L,R]$ is bipartite, and called the \emph{bipartite part}. We write
$n_L, n_R,$ and $n_O$ for $|L|, |R|,$ and $|O|$, respectively. We let $n_B = n_L + n_R$.
Given an arbitrary vertex set $T \subset V$ we abbreviate $T^O = T \cap O$ and $T^{L,R} = T \setminus O$.

We present algorithms for enumerating maximal induced bicliques in two settings.
The first setting, \AMBpfull, is our primary focus.
\begin{problem}{\AMBpfull}
\Input & A graph $G = (V,E)$.\\
\Prob  & All maximal induced bicliques in $G$.\\
\end{problem}

The second setting, \MCBpfull arises as a subproblem in our approach to \AMBpfull.
\begin{problem}{\MCBpfull}
\Input & A graph $G = (V,E)$; with $V$ partitioned into $X \cup Y$ s.t. $X$ is independent in $G$.\\
\Prob  & All maximal bicliques $\biclique$ in $G$ where $A \subseteq X$ and $B \subseteq Y$.\\
\end{problem}

\section{Lexicographic Enumeration}\label{sec:dias}
In this section we identify graphs containing maximal induced bicliques which would not be discovered by the algorithm of Dias et al. as originally stated in~\cite{DIAS}, which we refer to as \dias. We then describe a modified approach, which we prove is guaranteed to output all maximal induced bicliques in lexicographic order\footnote{We believe \dias could also be modified by removing the if conditions on lines 9 and 11 to output all MIBs in non-lexicographic order, with runtime $O(M n^3)$, but did not focus on this alternate strategy.} in time $O(M n^4)$, where $M$ is the number of MIBs in the graph.\looseness-1

For the reader's convenience we have transcribed the pseudo-code of \dias from~\cite{DIAS} in Algorithm \ref{alg:diaspseudo}. Line 13 relies on a subroutine described in the original paper, which finds the lexicographically least biclique containing a given set of nodes in $O(n^2)$ time. For consistency with~\cite{DIAS}, we let $N_j$ be the neighbors of node $j$ and $\bar{N}_j$ the non-neighbors throughout this section.

\begin{algorithm}[!h]
\begin{algorithmic}[1]
\Require Graph $G = (V, E)$, order on V
\Ensure List of all bicliques of $G$ in lexicographic order
\State Find the least biclique $B*$ of $G$
\State $Q \leftarrow \emptyset$
\State insert $B*$ in the queue $Q$
\While {$Q \neq \emptyset$}
	\State find the least biclique $B = X \cup Y$ of $Q$
	\State remove $B$ from $Q$ and output it
	\For {each vertex $j \in V \setminus B$}
		\State $X_j \leftarrow X \cap \{1, \dots, j\}$; $Y_j \leftarrow Y \cap \{1, \dots, j\}$
		\If {$X_j \cap N_j \neq \emptyset$ or $Y_j \cap \bar{N}_j \neq \emptyset$}
			\State $X'_j \leftarrow (X_j \setminus N_j) \cup \{j\}$; $Y'_j \leftarrow Y_j \setminus \bar{N}_j$
			\If {there exists no $\l \in \{1, \dots, j-1\} \setminus B_j$ \hspace*{16mm}such that $X'_j \cup Y'_j \cup \{l\}$ extends to a \hspace*{16mm}biclique of $G$}
				\State find the least biclique $B'$ of $G$ \hspace*{21mm}containing $X'_j \cup Y'_j$, if any
				\If {$B' \neq \emptyset$ and $B' \in Q$}
					\State Include $B'$ in $Q$
				\EndIf

			 \EndIf
		\EndIf
		\State Swap contents of $X_j$ and $Y_j$, repeat lines 9 \hspace*{11mm}to 14 (once per \textbf{for} loop)
	\EndFor
\EndWhile
\end{algorithmic}
\caption {\label{alg:diaspseudo}\dias pseudo-code~\cite{DIAS}.}
\end{algorithm}

We point out that the queue $Q$ needs to be able to recall all MIBs which have been stored in it, which can easily be accomplished by augmenting the data structure without impacting the complexity.
Furthermore, the pseudo-code in Algorithm \ref{alg:diaspseudo} was corrected in line 12 to exclude $j$ from the range of $l$ values.

In the proof of correctness in~\cite{DIAS} they show that for any MIB $B'$ there exists a node $j \in B'$ and another MIB $B$ which does not contain $j$, but does contain both $B'_{j-1} = B' \cap \{1, \dots, j-1\}$ and some $l \in \{1, \dots, j-1\} \setminus B'_{j-1}$. They examine the maximal such $j$ for each $B'$, and consider the iteration where $B$ and $j$ are defined as such (lines 7 and 8). Without loss of generality, assume that $j$ should be added to $X_j$ (as defined in line 9). If $Y_j$ only contains non-neighbors of $j$ and $X_j$ contains non-neighbors of $j$ which are not in $B'$, then no biclique will be found in line 15. Thus the algorithm as written will not produce all of the MIBs, as shown in Figure~\ref{fig:diascounterexample}.

\begin{figure}[h!]
  \centering

\begin{tikzpicture}[>=stealth,shorten >=1pt,auto,node distance=1cm,
  thick,main node/.style={circle,fill=blue!40,draw,font=\sffamily\bfseries},path/.style={rectangle,fill=blue!20,draw,font=\sffamily\bfseries},bunch/.style={ellipse,fill=blue!20,draw,font=\sffamily\bfseries},
red node/.style={circle,fill=red!40,draw,font=\sffamily\bfseries},path/.style={rectangle,fill=blue!20,draw,font=\sffamily\bfseries},bunch/.style={ellipse,fill=blue!20,draw,font=\sffamily\bfseries},
gray node/.style={circle,fill=gray!20,draw,font=\sffamily\bfseries},path/.style={rectangle,fill=blue!20,draw,font=\sffamily\bfseries},bunch/.style={ellipse,fill=blue!20,draw,font=\sffamily\bfseries}]

\node[gray node] (0) {0};
  \node[main node] (1) [right of=0] {1};
  \node[gray node] (2) [right of=1] {2};
    \node[main node] (3) [right of=2] {3};
      \node[gray node] (4) [right of=3] {4};
           \node[red node] (5) [right of=4] {5};

        \draw[every node/.style={font=\sffamily\small}]
     (0) edge [bend left] node {} (2)
    (1) edge node {} (2)
    edge [violet, bend right] node {} (5)
    (2) edge [bend left] node  {} (4)
	edge [bend left] node  {} (5)
    (3) edge [violet, bend right] node  {} (5)
    edge  node {} (4)
 (4) edge node  {} (5);

\end{tikzpicture}

  \caption{\label{fig:diascounterexample}
    An example of a graph where \dias would not find all MIBs. The MIB $C = \{1, 3\} \times \{5\}$ would not be found.
    In the notation of the algorithm $j = 3$ and $B = \{0, 1, 4\} \times \{2\}$ should yield $C$. Note that $ \{0, 1, 4\} \times \{2\}$ is the lexicographically least biclique. Consider when $B = \{0, 1, 4\} \times \{2\}$ and $j = 3$, $X_j = \{0,1\}$ and no biclique is found as $\{0,1,3\}$ have no common neighbors. When $X_j = \{2\}$ the biclique $\{2,3\} \times \{4\}$ is found. Thus $C$ is not added to $Q$. We note that when  $B = \{0, 1, 4\} \times \{2\}$ and $j=5$, we would add the bicliques $\{1,4\} \times \{5\}$ and $\{0,5\} \times \{2\}$ to $Q$. When $B =\{0, 5\} \times \{2\}$ (line 5), no new bicliques are added to $Q$ and then the next least biclique output in line 5 would not be $C$.
  }

\end{figure}
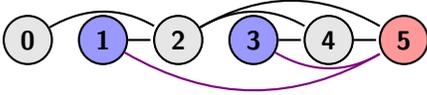

\subsection{Modified \dias}

We now describe \moddias and show that it finds all MIBs in lexicographic order.
The first issue in \dias arises when in line 8, $Y_j$ is empty and $X_j$ does not contain any neighbors of $j$. In this case, we fail to satisfy the \textbf{if}-condition in line 9.
The problem is resolved by appending an additional \textbf{or}-condition in line 9:
``\textbf{or} $|Y_j| = 0$''.

The other problematic case is when $Y'_j$ is empty and $X_j$ contains non-neighbors of $j$ which are not in the lexicographically next MIB (e.g. node $0$ for $\{1,3\} \times \{5\}$ in Figure \ref{fig:diascounterexample}). In this case, $B' = \emptyset$ in line 13,
so we add a new \textbf{else-if} clause to the conditional:
``\textbf{else if} $B' = \emptyset$ and $Y'_j = \emptyset$ \textbf{then}
\textbf{for all} $v \in N_j$ find the least biclique containing $(N(v) \cap X'_j, \{v\})$''.

We now argue that \moddias will find any biclique $B'$ missed by \dias in the above manner. Let $S = B' \cap X'_j$. Let $v_1$ be the least node in $B'$ which shares an edge with $j$.
Consider the iteration of the above process where $v = v_1$. Clearly all of $S$ is contained in $N(v_1) \cap X'_j$. Furthermore due to the maximality of $j$, the original argument from~\cite{DIAS} can be used to show that finding the least biclique containing $(N(v_1) \cap X'_j, \{v\})$ returns $B'$. In Figure \ref{fig:diascounterexample} where biclique $C$ is missed, when $j = 3$ and $B = \{0, 1, 4\} \times \{2\}$, $v_1 = 5$
and $N(v_1) \cap X'_j = \{1, 3\}$, which returns this previously missing biclique $C = \{1, 3\} \times \{5\}$.

This augmentation increases the delay time of the algorithm to be $O(n^4)$ since finding the least biclique may be called $O(n)$ times within the \textbf{for} loop at line 7, which itself has $O(n)$ iterations. Note that this addition does not alter their argument for the bicliques being output in lexicographic order.

\section{Algorithms}\label{sec:algorithms}

Here we describe two novel algorithms for biclique enumeration: \OCTMIB which enumerates all MIBs in a general graph $G$ by using an OCT decomposition $G[L,R,O]$ to drive a divide and conquer approach,
and \mcb which computes all MIBs with one partition in a designated independent set.
In \OCTMIB, removing an OCT set from $G$ enables use of efficient methods for enumerating MIBs in the bipartite setting, such as Zhang et al.'s \imbea algorithm.
Then a given MIB, $A \times B$, in the bipartite graph $G[L,R]$ can be checked for maximality in $G$ by attempting to add vertices from $O$ to $A \times B$.

Each MIB not found in $G[L,R]$ necessarily contains at least one vertex $v \in O$, allowing us to
 enumerate them by iterating over each vertex $v \in O$ and identifying all MIBs that contain $v$.
This process requires careful bookkeeping that we organize using the observation that in all MIBs containing a given vertex $v$, one side of the biclique must be an independent set completely connected to $v$.
Hence, we proceed with constructing ``seed'' bicliques from independent sets in $N(v)$.
We then ``grow'' these into maximal bicliques by adding vertices from an MIS in $O$ containing $v$.

Next we will
provide more detailed algorithm outlines for both \OCTMIB and \mcb.

\begin{algorithm}[!h]
\begin{algorithmic}[1]
\Require Graph $G = (V, E)$, partitioning of $V$ into $L, R, O$, where $L$ and $R$ are independent sets
\Ensure $\mathcal{M}$, all maximal induced bicliques of $G$
\State $\mathcal{M} \leftarrow \emptyset$

	\State $B_1 \leftarrow$ maximal bicliques in $G[L,R]$
	\For {$b \in B_1$}
		\If {$b$ is maximal in $G$}
			\State  add $b$ to $\mathcal{M}$
		\EndIf
	\EndFor
	\State $I_O = \mis(O)$
	\For {$S \in I_O$}
		\State $T \leftarrow \emptyset$ \Comment{to hold bags}
		\State Fix $\phi$ an order of $S$

		\For {$v \in S$}
			\State $B_2 \leftarrow$ unique initial bicliques via \mcb \& \mis
			\For {$b \in B_2$}
				\If {\texttt{not-future-max}($b$, $G$)}
					\State remove $b$ from $B_2$
				\EndIf
				\If {\texttt{maximal}($b$, $G$)}
					\State add $b$ to $\mathcal{M}$ (and keep $b$ in $B_2$)
				\EndIf
			\EndFor

		\If {$B_2$ is not empty}
			\State add bag $B_2$ to $T$
		\EndIf
		\EndFor
		\While {$T$ is not empty}
			\State Pick $\mathcal{B} \in T$ and let $T \leftarrow T \setminus \mathcal{B}$
			\For {$v \in S$: $\phi(v) > \phi(w) \, \forall\, w \in S \cap \mathcal{B}$}
				\State $B_3 \leftarrow \emptyset$
				\For {biclique $b \in \mathcal{B}$}
					\State $b* \leftarrow$ expand $b$ with $v$
					\If {\texttt{not-future-max}($b*$, $G$)}
						\State \textbf{continue}
					\EndIf
					\State $B_3 \leftarrow B_3 \cup \{b*\}$
					\If {\texttt{maximal}($b*$, $G$)}
						\State add $b*$ to $\mathcal{M}$ (and keep $b*$ in $B_3$)
					\EndIf
				\EndFor
				\If {$B_3$ is not empty}
					\State  add $B_3$ to $T$
				\EndIf
			\EndFor
		\EndWhile
	\EndFor
	\end{algorithmic}
\caption[]{\OCTMIB pseudo-code.\\Key phases: Bipartite (Lines 2--5), Initialization (Lines 10--18), and Expansion (Lines 19--31).}\label{losemibpseudo}
\end{algorithm}

\vspace{1em}

\begin{figure*}
    \includegraphics[width=0.5\textwidth]{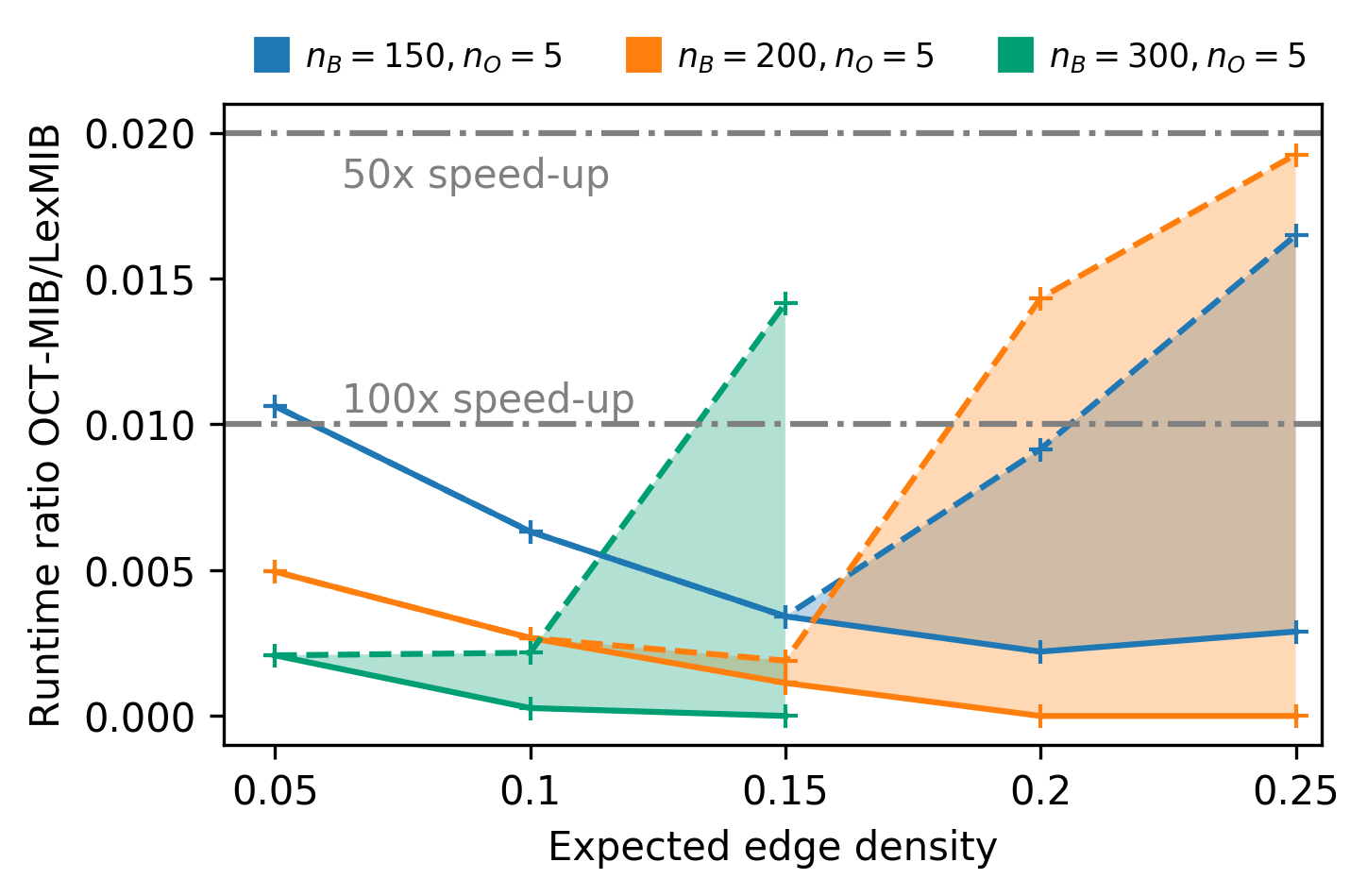}%
    \includegraphics[width=0.5\textwidth]{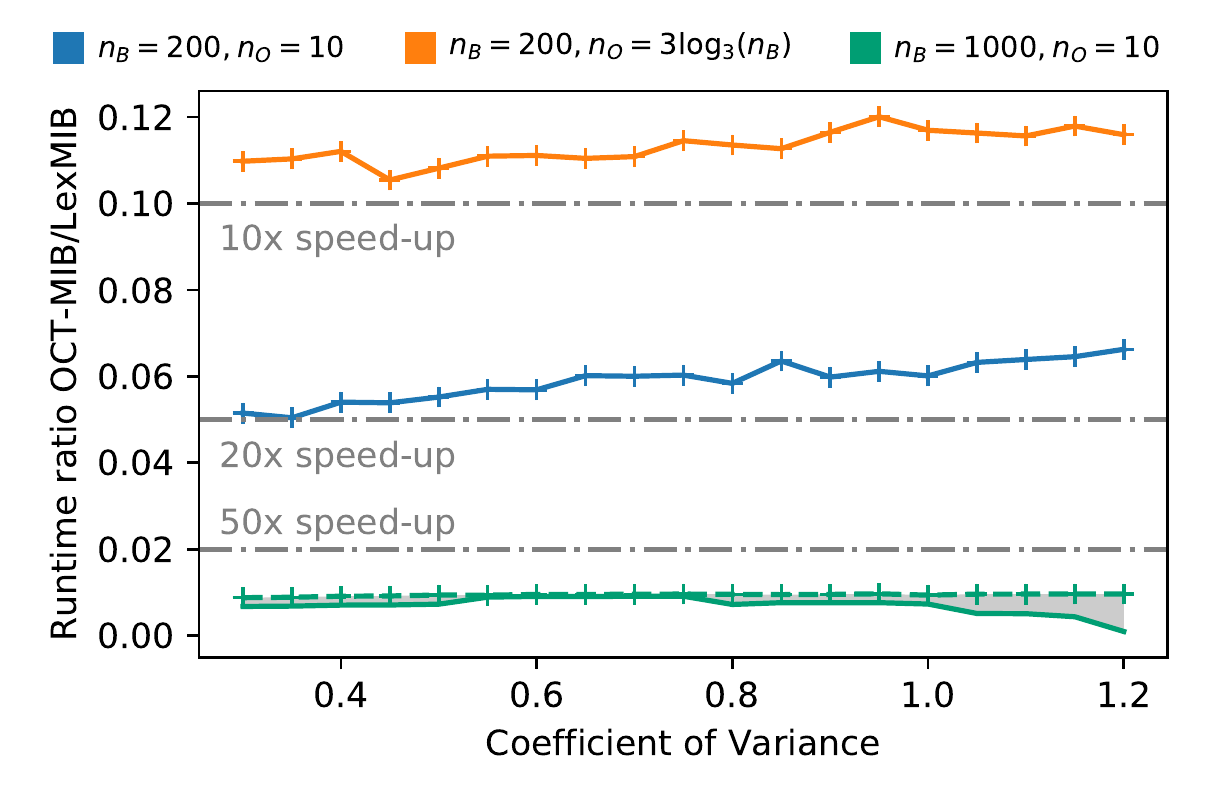}%
    \caption{\label{fig:ed-cv-envelope}
        Runtime ratios of \OCTMIB to \moddias with varying edge densities \textit{(Left)} and coefficient of variance \textit{(Right)}.
        Each curve corresponds to a fixed $n_B$ and $n_O$; for settings where \OCTMIB completed but \moddias timed out on some instances, we display both upper (dashed, using 3600s for \moddias) and lower (solid, using infinity for \moddias) bounds on the ratio.
        For edge density, each point is averaged over 10 randomly generated instances (5 with OCT-OCT density equal to $\bar{d}$ and 5 with it fixed to $5\%$, so as to ensure the results are not an artifact of the OCT-OCT density; when $\bar{d}=5\%$ there are only 5 instances); note that both algorithms timed out on all instances with $n_B$ = $300$ when edge density was greater than $0.15$. For cv, points are an average over 5 instances.
    }
\end{figure*}

\subsection{Algorithm outline}\label{sec:outline}

In both \MCB and \OCTMIB we take an independent set $S$ and build bicliques $\biclique$ such that
$S_I := S \cap (\biclique) \subseteq \bicleft$. We let $\bar{S_I} := \bicleft \setminus S_I$.
During the \phasestyle{initialization phase} we find bicliques of the form $\biclique$ where $\bicleft$ contains exactly one node from $S$. During this phase, we enumerate MISs in a subgraph using \mis~\cite{TSUKIYAMA};
in \OCTMIB, we also utilize \mcb.

We then ``grow'' the bicliques found during initialization in the \phasestyle{expansion phase} by repeatedly adding a node $w \in S \setminus \bicleft$ to $S_I$ and removing nodes from $\bicright$ and $\bar{S_I}$ to ensure $\biclique$ is still an induced biclique. We refer to this process as \emph{expanding with $w$}.
In \mcb we let $S = X$, and in \OCTMIB we let each MIS in $O$ be $S$ once. While \mcb only contains \phasestyle{initialization} and \phasestyle{expansion phases}, \OCTMIB also contains the \phasestyle{bipartite phase} alluded to earlier, where the MIBs which are completely contained in $L \cup R$ are found. (Line 2 of \algoref{losemibpseudo}; this can be completed using e.g. \imbea).

We employ an ordering $\order$ on the vertices of $S$ to limit the amount of redundant expansions we make.
We say $\biclique$ is \emph{near-maximal} if $\biclique$ is maximal with respect to $V \setminus (S \setminus S_I)$ and there is some set of nodes $S' \subseteq S \setminus \bicleft$ such that $(\bicleft \cup S') \times \bicright$ is a maximal biclique. We allow $S' = \emptyset$ so maximal bicliques are also near-maximal.
We refer to near-maximal bicliques where all of the missing nodes occur later in $\order$ as \emph{future-maximal}, and we only expand on future-maximal bicliques (discarding all others).
We group bicliques which contain the same set of $S$-nodes in \emph{bags}. During the \phasestyle{expansion phase} we iterate over the bags, expanding their set of bicliques with all $x \in S$ such that
$\order(x) > \order(y)$ for all $y \in S$ occuring in some biclique in the bag. For each $S$-node we expand with, we create a new bag to hold the new future-maximal bicliques.
The general approach of \OCTMIB is outlined in \algoref{losemibpseudo}. The pseudo-code for \mcb is similar but does not include a \phasestyle{bipartite phase} (lines 2 to 5) and only uses \mis in initialization.

\section{Theoretical Guarantees}\label{sec:theory}

We now state the correctness and asymptotic complexity of \OCTMIB.
The corresponding proofs rely on details from Appendix~\ref{sec:octmib} and can be seen in Appendix \ref{sec:octmibproofs}.\looseness-1

\begin{restatable}{thm}{octmibcorrect}
\label{thm:octmibcorrect}
\OCTMIB finds all maximal induced bicliques.
\end{restatable}

\begin{restatable}{thm}{octmibcomplexity}
\label{thm:octmibcomplexity}
    Given a graph $G$ with OCT decomposition $G[L,R,O]$,
\OCTMIB runs in $O(Mmnn_O^2I_O)$ time, where $M$ is the number of MIBs and $I_O \leq 3^{n_O/3}$ is the number of maximal independent sets in $O$. Its space complexity is $O(Mn)$.
\end{restatable}

\section{Empirical Evaluation}\label{sec:experiments}

In this section we evaluate the performance of \OCTMIB on a suite of synthetic graphs with a variety of sizes, densities, degree distributions, and OCT decomposition structures, to see how various aspects of the graph impact the runtime. We benchmark against \moddias, our modified version of the approach described in~\cite{DIAS} (see Section~\ref{sec:dias}).

\begin{table}[h]
    \centering
{\small
    \begin{tabularx}{0.7\columnwidth}{rrrr}
        \toprule
        $(\bar{d}, b)$ & $n_B$ & $n_O$ & $\bar{M}$ \\
        \midrule
        \multirow{2}{*}{(5\%, 1:10)} & 1000 & - & 3717.8 \\
                                    & 1000 & 20 & 16631.0 \\
        \midrule
        \multirow{2}{*}{(5\% 1:1)} & 1000 & - & 50424.4 \\
                                    & 600 & 10 & 962305.8 \\
        \midrule
        \multirow{2}{*}{(10\%, 1:1)} & 600 & - & 86239.0 \\
                                      & 300 & 5 & 185760.4 \\
        \bottomrule
    \end{tabularx}
}
    \caption{\label{tab:mib-count}
       Comparison of the average number of MIBs $\bar{M}$ in bipartite and near-bipartite graphs with
       equivalent expected edge density $\bar{d}$ and balance $b$.
    }
\end{table}

\subsection{Experimental setup and data}
We implemented \OCTMIB and \moddias\footnote{
The experiments described in this manuscript were run using an early implementation of \moddias
which output bicliques according to a fixed ordering different from that described in~\cite{DIAS}.
The implementation available at~\cite{HORTON} removes this inconsistency.
}
in C++;
we note that no prior implementation of \dias was available.
All code is open source under a BSD 3-clause license and publicly available at~\cite{HORTON}.
Hardware specifications are in Appendix~\ref{app:data}.

\begin{figure*}
    \includegraphics[width=0.5\textwidth]{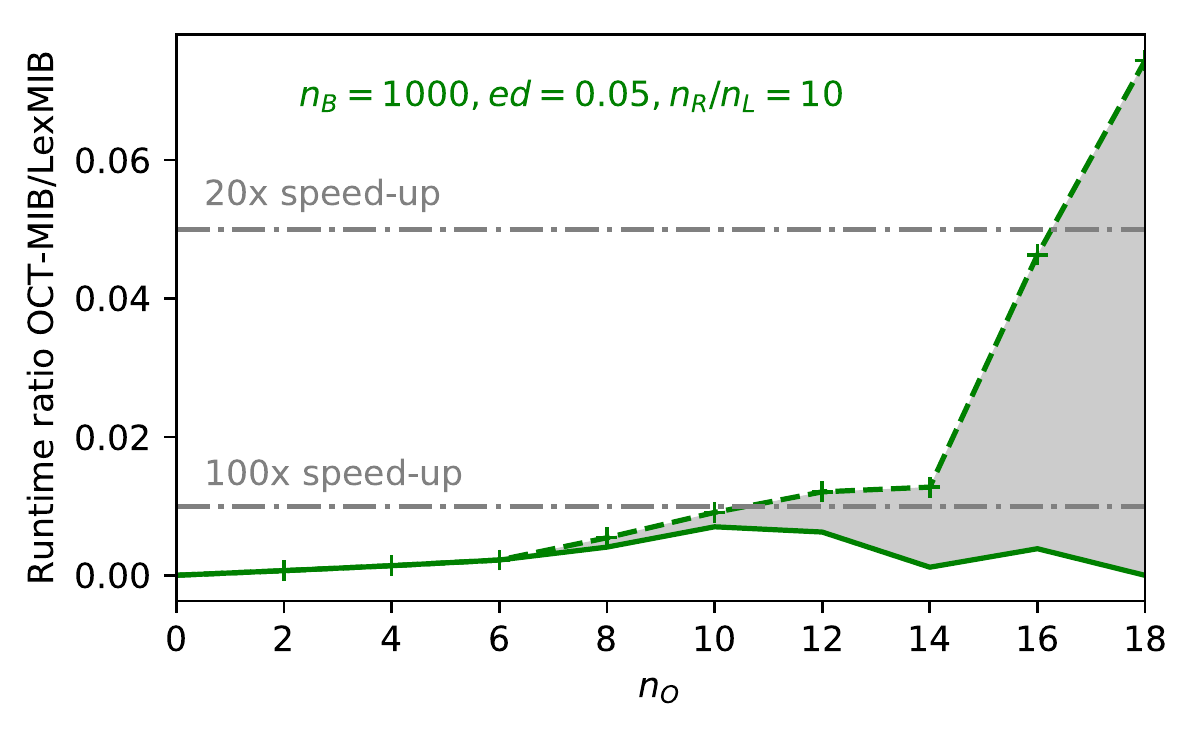}%
    \includegraphics[width=0.5\textwidth]{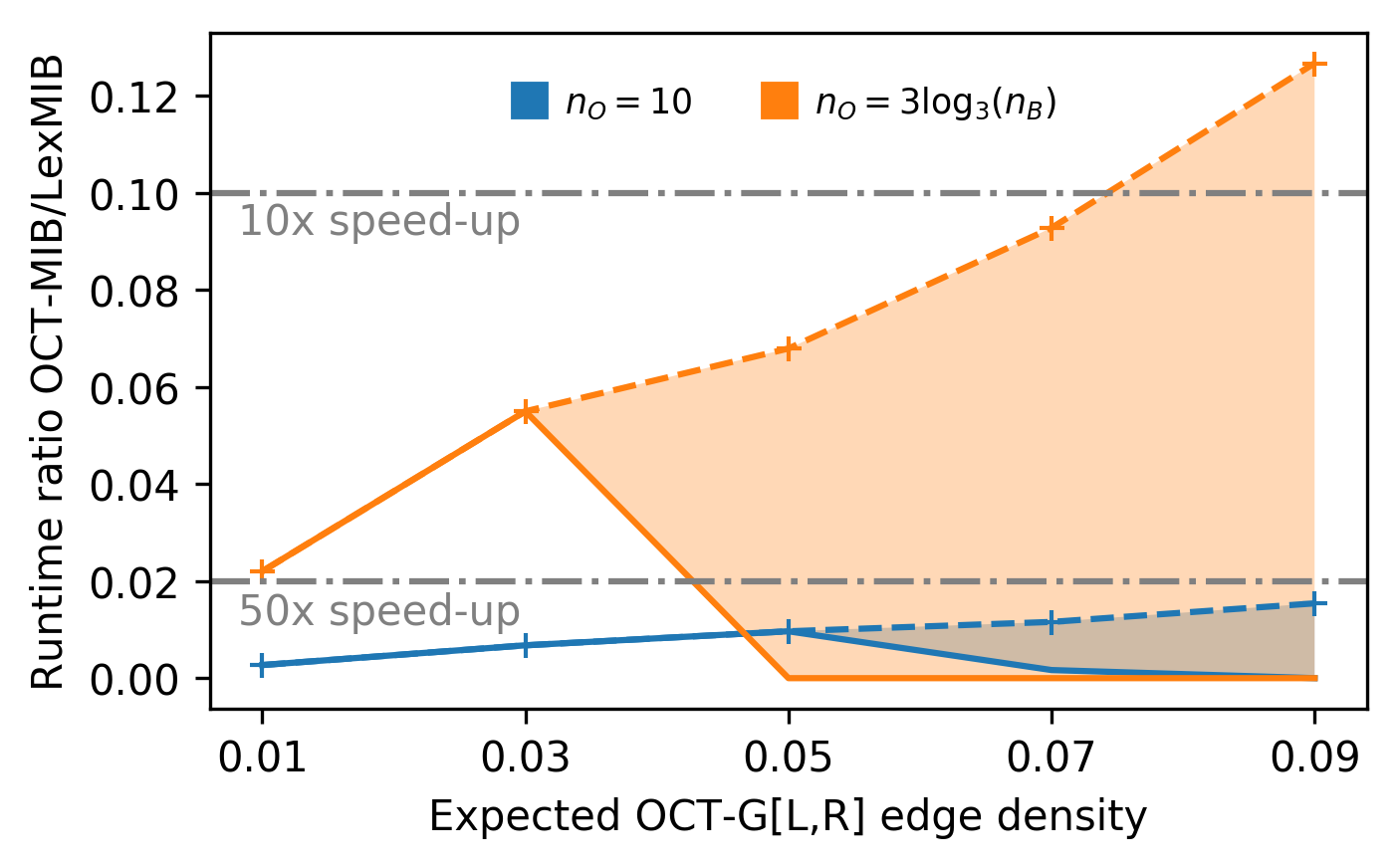}%
    \caption{\label{fig:octsize-octgdensity}
        Runtime ratios of \OCTMIB to \moddias with varying OCT sizes \textit{(Left)} and expected OCT-$G[L,R]$ edge densities \textit{(Right)}. Each data point represents the average over 5 random instances with $n_B = 1000$ and $b = 1:10$.
        For settings where \OCTMIB completed but \moddias timed out on some instances, we display both upper (dashed, using 3600s for \moddias) and lower (solid, using infinity for \moddias) bounds on the ratio. In the OCT-$G[L,R]$ edge density experiment, the expected edge density is $5\%$
        within $O$ and between $L$ and $R$.
    }
\end{figure*}

\noindent\paragraph{Data.}
Our datasets are generated using a modified version of the random graph generator of Zhang et al.~\cite{ZHANG} that augments the random bipartite graph to have OCT sets of known size. The generator allows a user to specify the sizes of $L$, $R$, and $O$ ($n_L$, $n_R$, and $n_O$), the expected edge densities between $L$ and $R$, $O$ and $\{L,R\}$, and within $O$, and control the coefficient of variation (cv; the standard deviation divided by the mean) of the expected number of neighbors in the larger partition over the smaller partition and in $\{L, R\}$ over the OCT set. The generator identifies the set $O$ when it creates each graph; these sets $O$ are used in \OCTMIB, but the techniques mentioned in Section~\ref{sec:related-work} could also be used to find a valid OCT set.
For additional details on the generator and parameter settings see Appendix~\ref{app:data}.

Unless otherwise specified, the following default parameters are used: expected edge density $\bar{d} = 5\%$, $cv=0.5$, $n_B = 1000$ and $b = $1:10; additionally, the edge density between $O$ and $L \cup R$ is the same as that between $L$ and $R$. For all experiments where it was appropriate we tested two OCT sizes, $n_O=10$ and $n_O=\max(5, 3\log_3(n))$. Our default timeout was one hour (3600s).

We note that many of our graphs are significantly smaller than those used in~\cite{ZHANG}, due to an explosion in the number of MIBs in even slightly non-bipartite instances (see Table~\ref{tab:mib-count}). Our graph corpus was designed to include instances with approximately the
same number of MIBs as those in~\cite{ZHANG} for all experiments evaluating variation in the
bipartite subgraph.

\subsection{Comparison of \OCTMIB and \moddias}
We first measured the impact of the heterogeneity of the degree distribution by varying the cv between $L$ and $R$ from $0.3$ to $1.2$ in steps of $0.05$ (cv between $O$ and $\{L, R\}$ is still 0.5). As seen in the right panel in Figure~\ref{fig:ed-cv-envelope},
the ratio of the runtime of \OCTMIB to that of \moddias generally decreases as cv increases when $n_B=200$, but is at least consistent, and possibly increases when $n_B=1000$.

In order to evaluate the effect of edge density, we restricted our attention to graphs with 1:1 balance (as in~\cite{ZHANG}), where $n_B \in \{150, 200, 300\}$ and uniform expected edge density ranges from $0.05$ to $0.25$ in steps of $0.05$. Internal density within $O$ was either set to match edge density or fixed to be $0.05$. As seen at left in Figure~\ref{fig:ed-cv-envelope}, the ratio of the runtime of \OCTMIB to that of \moddias decreases as the expected density increases to $0.15$, after which it is likely to continue to decrease but can not be determined due to \moddias timing out.

Finally, as was done in~\cite{ZHANG}, we tested the effect of the size $n_B$ and balance $b$ of the bipartite subgraph on runtime. This was particularly challenging due to the significant increase in the number of MIBs at lower balance ratios (see Table~\ref{tab:mib-count}). Results and discussion are in Appendix~\ref{app:data} (Figure~\ref{fig:balance-small}); the effect of making the bipartite subgraph more imbalanced was similar for both algorithms.

\subsection{Varying OCT structures}\label{sec:oct-exploration}

This set of experiments explores how \OCTMIB performs on graphs with widely varying OCT sizes and
densities.

We tested $n_O$ by varying it from $0$ to $18$ in steps of size $2$ (left panel of Figure~\ref{fig:octsize-octgdensity}), noting that the runtime of \OCTMIB tends to increase relative to that of \moddias as $n_o$ increases. We examined the impact of the density between $O$ and $\{L, R\}$ by varying it from $0.01$ to $0.09$ in steps of size $0.02$ (right panel of Figure \ref{fig:octsize-octgdensity}), observing that the runtime of \OCTMIB slightly increases relative to that of \moddias as the expected density increases. In the latter we let the cv between $O$ and $\{L, R\}$ be 0.

We also looked at graphs where the structure of $O$ was optimal for \OCTMIB (independent) and adversarial (perfect matching) with respect to the number of MISs in $O$.
For the best case we fixed $n_B = 10,000$ and the balance at 1:10, while setting $n_O \in \{5, 10, 25\}$, and edge density to $0.005$ and $0.01$. For this experiment, we increased the timeout to 7200 seconds.
Table~\ref{tab:best-case} shows the comparatively modest growth in runtime as the independent OCT set is increased from 5 to 25.

\begin{table}
    \centering
{\small
    \begin{tabularx}{0.5\linewidth}{r|rr}
        \toprule
        $n_O$ & {$\bar{d} = 0.5\%$}   & {$\bar{d} = 1\%$} \\
        \midrule
        5  &  312.7 &  1345.8  \\
        10 & 597.8 & 2828.1\\
	25 & 1552.1 & 6683.9\\
        \bottomrule
    \end{tabularx}
}
    \caption{\label{tab:best-case}
        Runtimes for \OCTMIB on graphs with $n_b=10000$ and independent sets $O$. Each entry represents the average runtime (seconds) over 5 random graphs.
    }
\end{table}

\begin{figure*}[t!]
    \includegraphics [width=0.5\textwidth]{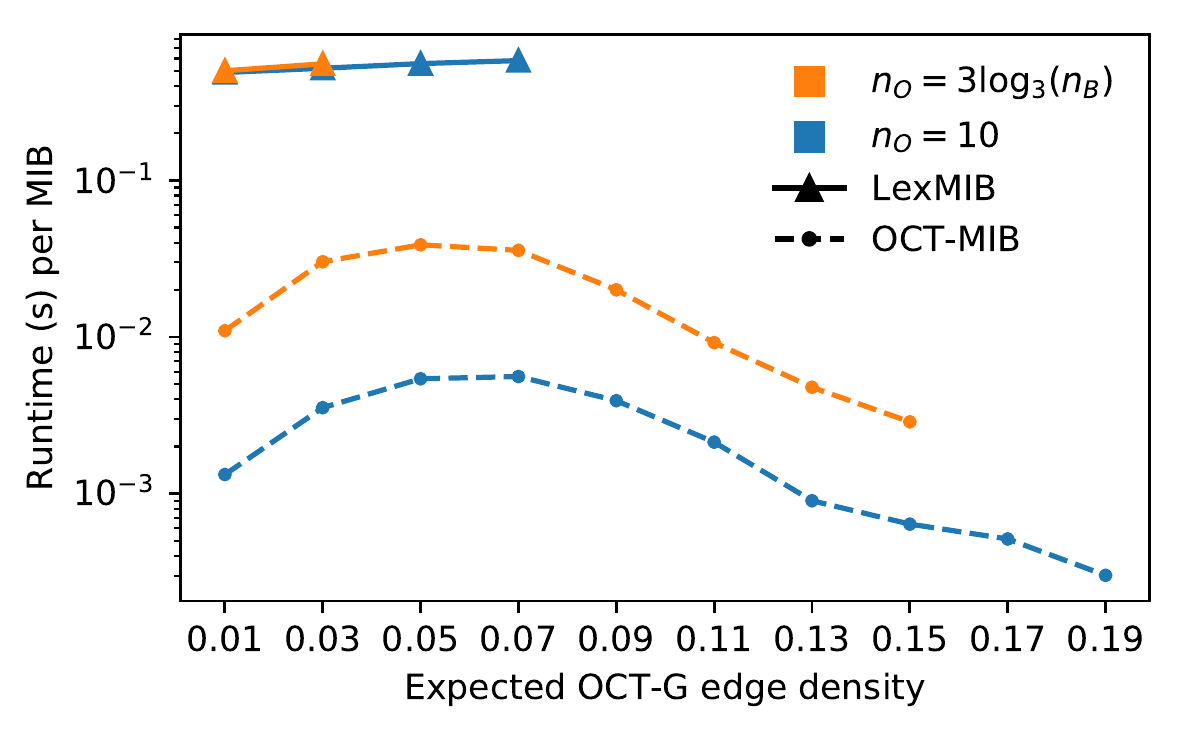}%
    \includegraphics [width=0.5\textwidth]{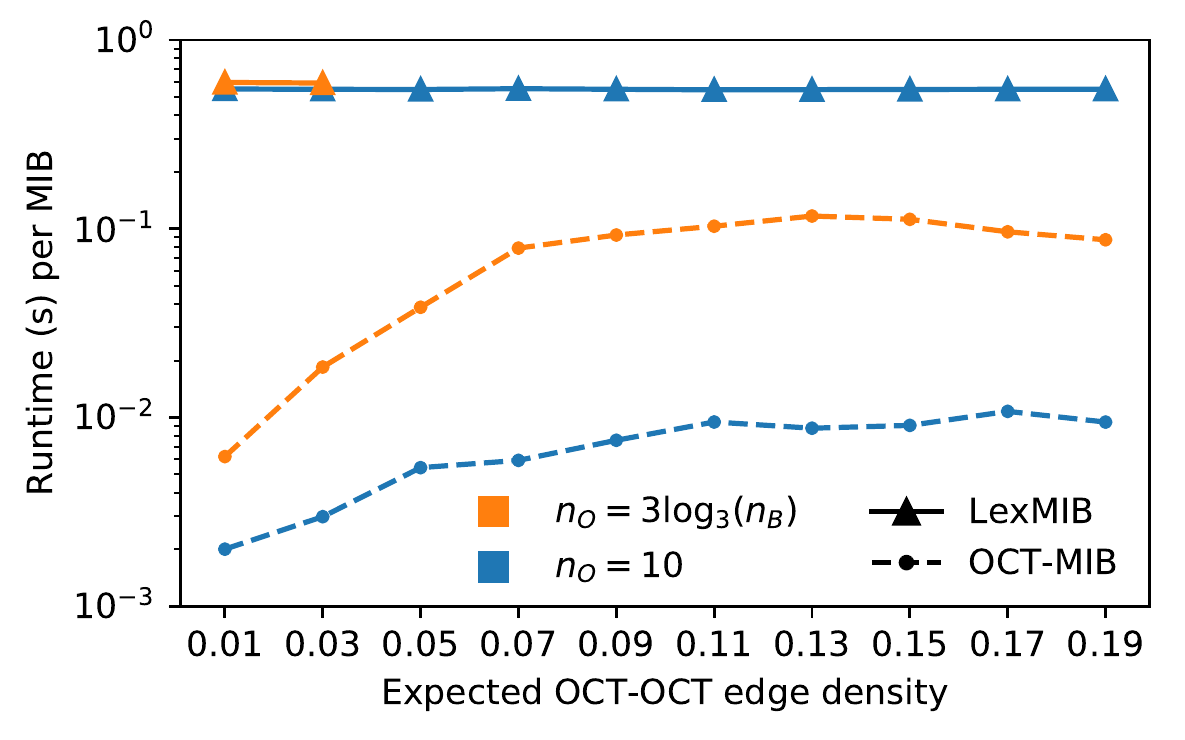}%
    \caption{\label{fig:permib}
    Evaluation of \OCTMIB and \moddias with varying expected OCT-$G[L,R]$ edge densities (\textit{Left}) and varying edge density within the OCT set (\textit{Right}), expressed in terms of average runtime per MIB.    Each data point represents the average over 5 random instances with $n_B = 1000$, $b = 1:10$, and expected edge density $5\%$ in the portion of the graph where edge density is not being varied.  For the setting where we varied the density within the OCT set and $n_O = 3\log_3(n_B)$, \moddias timed out on all instances with density greater than 0.03.
    }
\end{figure*}

For graphs with worst-case OCT sets (perfect matching), we let $n_B$ be $300$ or $600$, $n_O = 3\log_3(n)$, and set the balance to 1:1. As seen in Table~\ref{tab:worst-case}, \OCTMIB outperforms \moddias by nearly an order of magnitude.

To evaluate the effect of the edge density within $O$, we generated graphs where the expected density within the OCT set varied from $0.01$ to $0.19$ in steps of size $0.02$ (see the right panel of Figure~\ref{fig:permib}). In order to best observe the impact of the density within the OCT set we let the cv between the OCT set and $\{L, R\}$ be 0. It does not appear that the density within the OCT set has much impact on \moddias's runtime. This is likely because \moddias's runtime is heavily influenced by the average degree in $G$, which does not increase significantly as the density within $O$ increases.
However the runtime of \OCTMIB increases as the density within $O$ increases, which is not surprising
due to higher density in $O$ creating more MISs in the OCT set.
Furthermore, this shows that there are likely many regimes where the average time spent per MIB is on the order of thousandths of a second.

In the left panel of Figure~\ref{fig:permib}, we present data from the same OCT-$G[L, R]$ edge density experiment as in the right panel of Figure~\ref{fig:octsize-octgdensity}, displayed as average time per MIB.
One might believe that increasing the density between $O$ and $\{L, R\}$ would increase the number of MIBs, so the increase in runtime observed in Figure~\ref{fig:octsize-octgdensity} follows naturally. However average time per MIB actually decreases for \OCTMIB as the density is increased from 0.07 to 0.19 in both settings. We hypothesize that this may be because more MIBs are found in the \phasestyle{expansion phase} as the density increases, due to more OCT nodes being in the MIB on average. Our complexity analysis tells us that finding MIBs in the \phasestyle{expansion phase} should take less time than the \phasestyle{initialization phase}. These results are particularly interesting because when contrasting them with those where the expected edge density within the OCT set is increased (right panel of Figure~\ref{fig:permib}), we observe a converse effect.

Finally, we evaluated both algorithms on graphs with large values of $n_O$ (where \OCTMIB's computational complexity is worse than that of \moddias) and confirmed that in this scenario,
\moddias is the preferable algorithm in both theory and practice. Specifically, on
graphs where $n_B = 200$, \OCTMIB was faster than \moddias when $n_O = 20$, but at $n_O = 40$, it was already an order of magnitude slower. The effect was even more pronounced for a corpus where $n_B = 400$: \moddias had average runtimes of $119s$ and $603s$ when $n_O$ was $40$ and $80$, respectively, whereas \OCTMIB timed out ($3600$s) on 9/10 instances (finishing a single $n_O = 40$ instance in $2409$s).\looseness-1

\begin{table}
    \centering
{\small
    \begin{tabularx}{0.8\linewidth}{rrrr}
        \toprule
        Algorithm & {$n_B = 300$}   & & {$n_B = 600$} \\
        \midrule
        \moddias  &  (74.0, -) &  &  (3420, 4)  \\
        \OCTMIB  &  (19.8, -)  &  &  (634,  -)  \\
        \bottomrule
    \end{tabularx}
}
    \caption{\label{tab:worst-case}
        Runtimes for \OCTMIB and \moddias on adversarially created graphs. Each entry represents the average runtime (seconds) on completed instances and the number of timeouts ($3600s$). For each $n_B$, we used 5 random graphs with $\bar{d} = 5\%$, $b$ = 1:1,
        $n_O = \lfloor 3 \log_3(n_B) \rfloor$, and an OCT set which is a perfect matching.
    }
\end{table}

\section{Conclusions}

We present a new algorithm \OCTMIB for enumerating maximal induced bicliques in general graphs, with runtime parameterized by the size of an odd cycle transversal. Additionally, we describe a flaw in the algorithm of Dias et al.~\cite{DIAS}, and give a corrected variant \moddias.

It is particularly noteworthy that \OCTMIB has the best-known complexity for enumerating MIBs even when the parameter is logarithmic in the instance size---far from the constant regime often necessary for FPT approaches to be efficient.

We implement and benchmark both algorithms on a corpus of synthetic graphs, establishing that
in practice, \OCTMIB is typically an order of magnitude faster than \moddias---even when $n_O$ is $O(\log(n))$. We also confirm that \OCTMIB finishes on graphs with over 1,000,000 MIBs in minutes, and enumerates all MIBs in sparse graphs ($\bar{d}=1\%$) where $n=10,000$ and $n_O$ is $O(\log(n))$ in less than two hours.

Our experiments also demonstrate that size may not be the most important feature of an OCT set in determining \OCTMIB's runtime in practice. Although this paper focused on the algorithm's performance
in the setting where an OCT decomposition with given $n_O$ was known, it would be interesting to
further analyze how the edge structure within the OCT set impacts the runtime (for example, we know that
the number of maximal independent sets in $O$ plays a key role in \OCTMIB's execution), and whether OCT sets found by different algorithms/heuristics are more or less advantageous for biclique enumeration.

We also point out that \MCBpfull for which we give an $O(M|X||Y|m)$ algorithm
is a novel problem in its own right and may be well-motivated by applications where large independent sets naturally arise.

Finally, we note that the current implementation of \OCTMIB could be improved by replacing the MIS-enumeration algorithm with that of~\cite{TSUKIYAMA} and the bipartite phase with
the implementation of \imbea used in~\cite{ZHANG}.\looseness-1


\clearpage
\appendix
\part*{Appendices}

\section{Extremal Case for Zhang et al.}\label{app:zhang}
The $O(\mathcal{B} m)$ runtime\footnote{ where $\mathcal{B}$ is the number of MIBs found} stated for the algorithm \imbea in~\cite{ZHANG} contains a typo. The corrected runtime is $O(\mathcal{B} n m)$.
In their analysis they bound the number of nodes in their seach tree and derive their complexity by bounding the time spent on each one by $O(m)$. They show that the number of intermediate search tree nodes is at most $\sum_{i=0}^{d-1} (n-1)^i$, which is $O(\mathcal{B})$, and that the number of leaves is at most $(n-1)^d$ which is said to be $O(\mathcal{B})$. However, by the geometric series, the number of leaves can only be bounded by $O(\mathcal{B}n)$, giving an $O(\mathcal{B}n)$ bound on the number of search tree nodes. We now provide an example of a graph where the number of leaves is $\Omega(\mathcal{B}n)$, implying that the number of leaves cannot be $O(\mathcal{B})$ in the general case.

Graphs which are a perfect matching plus an apex to one node from each edge in the matching provide an example of a graph family where this extremal behavior manifests. Note that these graphs are bipartite with $n_a$ nodes in the smaller partition and $n_b=n_a+1$ in the larger. See Figure \ref{fig:mbeafamily} for the instance where $n_a=4$ and $n_b=5$. There are $\mathcal{B}= (n_a + 1)$ MIBs in such a graph. Assume the nodes in the smaller partition are labeled $a_1, \dots, a_{n_a}$. For $1 <i < n_a$, \imbea attempts to expand the biclique containing $a_i$ and its two neighbors with all nodes in $\{a_{i+1}, \dots, a_{n_a}\}$, each of which creates a leaf in the search tree. Thus there are $\sum_{i=2}^{n_a-1}n_a - i = \Omega({n_a}^2)$ leaves created and the number of leaves is $\Omega(\mathcal{B} n_a) = \Omega(\mathcal{B} n)$.

\begin{figure}[h]
\centering
\begin{tikzpicture}[>=stealth,shorten >=1pt,auto,node distance=1.8cm,
  thick,main node/.style={circle,fill=blue!20,draw,font=\sffamily\bfseries},path/.style={rectangle,fill=blue!20,draw,font=\sffamily\bfseries},bunch/.style={ellipse,fill=blue!20,draw,font=\sffamily\bfseries}]

\node[main node] (0) {$a_1$};
  \node[main node] (1) [right of=0] {$a_2$};
  \node[main node] (2) [right of=1] {$a_3$};
    \node[main node] (3) [right of=2] {$a_4$};
      \node[main node] (4) [below of=0] {$b_1$};
           \node[main node] (5) [right of=4] {$b_2$};
                      \node[main node] (6) [right of=5] {$b_3$};
           		\node[main node] (7) [right of=6] {$b_4$};
                      \node[main node] (8) [right of=7] {$b_5$};

        \draw[every node/.style={font=\sffamily\small}]
     (0) edge  node {} (4)
	edge  node {} (8)
    (1) edge  node {} (5)
	edge  node {} (8)
    (2) edge node  {} (6)
	edge  node {} (8)
    (3) edge  node  {} (7)
	edge  node {} (8);

\end{tikzpicture}

  \caption{\label{fig:mbeafamily} An example of a graph in which \imbea would run in time $\Omega(Bmn)$.
  }

\end{figure}
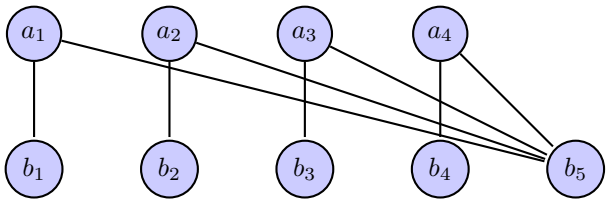

\section{\OCTMIB Algorithm}\label{app:octmib}

Here we provide the details of the mechanics of \OCTMIB. Please refer to Section \ref{sec:outline} for an outline of \OCTMIB.
Both \OCTMIB and \MCB utilize the \emph{blueprint} data structure, described below.

\noindent\textsc{Definition}. A \textbf{blueprint} is an octuple of sets $(S_I,~IF_I,~CC_I,~S_W,~S_P,~IF_O,~CC_O,~O_{IF})$ and a specified node $next$.
The sets $S_I, S_W, S_P \subseteq S$ satisfy
\begin{tight_item}
\item $S_I \subseteq S$ contains the $S$-nodes in the biclique
\item $S_W := \{x \in S \,|\, \order(x) > \order(w) \forall w \in S_I$\}
\item $S_P := S \setminus (S_I \cup S_W)$.
\end{tight_item}
The sets $IF_I,  IF_O, CC_I, CC_O \subseteq V(G) \setminus S$ with
\begin{tight_item}
\item $IF_I \subseteq \indfrom(S_I)$
\item $IF_O := \indfrom(S_I) \setminus IF_I$
\item $CC_I \subseteq \ccto(S_I)$
\item $CC_O := \ccto(S_I) \setminus CC_I$,
\end{tight_item}
and the set $O_{IF} := (\indfrom(S_I) \cap O) \setminus S$.
Finally,
$$next := \displaystyle{\argmin_{s \in \{S_W \cap \indfrom(IF_I) \cap \ccto(CC_I)\}}\order(s)}.$$
\vspace{0.5em}

By design, $(S_I \cup IF_I) \times CC_I$ is an induced biclique and is \textit{represented}
by the blueprint. We refer to blueprints and the bicliques they represent interchangeably in the description and discussion of the algorithms.

 Within $S$, $S_W$ is the set of nodes that are still candidates to be expanded with, while $S_P$ is the set of  nodes that have been considered to be in a biclique with the current $S_I$ elsewhere. We use $IF_O$ and $CC_O$ to check near-maximality, and $O_{IF}$ to check global maximality in \OCTMIB. The vertex $next$ is used to prevent expansions which produce bicliques which are not future-maximal.
We defer complete descriptions of the algorithmic mechanics to Appendix~\ref{sec:octmib} (\OCTMIB) and Appendix~\ref{sec:mcb} (\mcb), respectively.

\subsection{\OCTMIB Algorithm}\label{sec:octmib}

We begin by explaining checks for various properties of a blueprint, including \texttt{not-future-max}($b$, $G$) and \texttt{maximal}($b,G$). If the node we are expanding with is ordered later than $next$, then we are able to detect that the expansion will not yield a new blueprint. We say that a biclique is \emph{invalid} if $CC_I$ is empty.
A blueprint $b$ is \emph{not future-maximal} if at least one of the following conditions is met; (i)
 $S_P \cap (\ccto(CC_I) \cap \indfrom(IF_I)) \neq \emptyset$,
 (ii) $CC_O \cap (\indfrom(CC_I) \cap \ccto(IF_I)) \neq \emptyset$, or
 (iii) $IF_O \cap (\indfrom(IF_I) \cap \ccto(CC_I)) \neq \emptyset$.
 Note conditions (ii) and (iii) imply $b$ is not near-maximal.
 Finally a blueprint $b$ is \emph{maximal}  $(S_W \cup O_{IF}) \cap (\ccto(CC_I) \cap \indfrom(IF_I)) = \emptyset$. We note that each of these checks can be done in $O(m)$ time.

\noindent\phasestyle{Bipartite phase.}
We run the algorithm from Appendix \ref{app:zhang} on $G[L,R]$, and for each biclique found we check if an OCT node can be added to either side, which can be done in $O(m)$ time. If an OCT node cannot be added then we have found a MIB.

\noindent\phasestyle{Initialization phase.}
Recall that \OCTMIB iterates over MISs in $O$; let $S = s_1, \dots, s_{|S|}$ be the current MIS. For each $s_i$ we create a new  bag $\rooti$ which will hold blueprints. Let $S_I = \{s_i\}$, $S_W = \{s_{i+1}, \dots, s_{|S|}\}$, $S_P = \{s_1, \dots, s_{i-1}\}$ and $O_{IF} = O \setminus (N(s_i) \cup S)$. Now we find candidate blueprints in three rounds, adding the future-maximal ones to $\rooti$.

In the first round, for each MIS $I \subseteq N(s_i)$, we check if any node from $\overline{N}_L(s_i) := L \setminus N(s_i)$ or $\overline{N}_R(s_i) := R \setminus N(s_i)$ is in $\ccto(I)$. If no node from either set is completely connected to $I$, we create a candidate blueprint where $CC_I = I$, $CC_O = N(s_i) \setminus I$, $IF_I = \emptyset$, $IF_O = (L \cup R) \setminus N(v_i)$, and $S_I$, $S_W$, $S_P$, and $O_{IF}$ are defined as above.
Otherwise we continue on to processing the next MIS in $N(s_i)$ and do not create a candidate blueprint for $I$.

In the second round, we run \mcb on the subgraph induced on $\overline{N}_L(s_i) \cup (N(s_i) \setminus L)$ with $X = \overline{N}_L(s_i)$. Then for each returned MCB $\biclique$, we find the set of nodes $\overline{N}_R(s_i)^+ := \overline{N}_R(s_i) \cap (\indfrom(\bicleft) \cap \ccto(\bicright))$. We create a candidate blueprint with $CC_I = \bicright$, $CC_O = N(s_i) \setminus \bicright$, $IF_I = \bicleft \cup \overline{N}_R(s_i)^+$, $IF_O = (L \cup R) \setminus (N(v_i) \cup \bicleft \cup \overline{N}_R(s_i)^+)$,
and $S_I$, $S_W$, $S_P$, and $O_{IF}$ defined as above.

In the third round, we run \mcb on the subgraph induced on $\overline{N}_R(s_i) \cup (N(s_i) \setminus R)$ with $\overline{N}_R(s_i)$ as the designated independent set. For each returned MCB $\biclique$, we
check if $\overline{N}_L(s_i) \cap (\indfrom(\bicleft) \cap \ccto(\bicright)) = \emptyset$.
If so, we create a candidate blueprint where $CC_I = \bicright$, $CC_O = N(s_i) \setminus \bicright$, $IF_I = \bicleft$, $IF_O = (L \cup R) \setminus (N(v_i) \cup \bicleft)$, and $S_I$, $S_W$, $S_P$, and $O_{IF}$ are defined as above. Otherwise, we continue processing the next MCB (without creating a candidate blueprint for $\biclique$).

We check each candidate blueprint for future-maximality, adding it to $\rooti$ if true, and discarding it otherwise.  If the blueprint is maximal we add $(S_I \cup IF_I) \times CC_I$ to the set of maximal bicliques and let $next = \infty$. If the blueprint is not maximal because an OCT node can be added to it we also let $next = \infty$; if it is not maximal because of an $S$-node later in the ordering we let $next$ be the first such node from $S_W$. The blueprint remains in $\rooti$ in either case. As long as $\rooti$ is non-empty we add it to $T$, the set of bags to be processed in the \phasestyle{expansion phase}.

\noindent\phasestyle{Expansion phase.} We now process bags from $T$ until it is empty. We refer to removing a bag $\bag$ from $T$ and expanding on all of the blueprints in $\bag$ with a vertex $v$ as \emph{branching on $\bag$ with $v$}. In \OCTMIB $S_I$, $S_W$, $S_P$, and $O_{IF}$ are the same in all blueprints in a given bag and we branch on a bag with all of the nodes in $S_W$ in the order which matches the order of $S$; $w_1, \dots, w_{|S_W|}$.

Let $w_i$ be the node we are currently branching with and $\childi$ be a bag we create to hold new blueprints formed by expanding with $w_i$. Let $\blueprint = (S_I,~IF_I,~CC_I,~S_W,~S_P,~IF_O,~CC_O,~O_{IF},~next)$ be the blueprint currently being expanded with. If we can detect that the expansion will not yield a new blueprint  we terminate this expansion. Otherwise we create a new blueprint $\blueprint' = (S_I',~IF_I',~CC_I',~S_W',~S_P',~IF_O',~CC_O',~O'_{IF},~next')$ as follows.

Let $S_I' = S_I \cup \{w_i\}$, $IF_I' = IF_I \setminus N(w_i)$, $CC_I' = CC_I \cap N(w_i)$, $S_W' = \{w_{i+1}, \dots, w_{|S_W|}$, $S_P' = S_P \cup \{w_1, \dots, w_{i-1}\}$, $IF_O' = IF_O \setminus N(w_i)$, $CC_O' = CC_O \cap N(w_i)$, and $O'_{IF} = O_{IF} \setminus N(w_i)$. If $(S_I' \cup IF_I') \times CC_I'$
is invalid  or it is not future-maximal,   we terminate the expansion.

We  must consider the case where the tuples $(CC_I', IF_I')$  from different blueprints are identical after expansion. To handle this we maintain a hashtable for the current $w_i$ being branched with, and hash each tuple $(CC_I', IF_I')$  and terminate if we find a conflict. This hashtable can be discarded once we finish branching with the current node.

If the expansion has not been terminated then $\blueprint'$ is future-maximal so we add $\blueprint'$ to $\childi$. If the blueprint is maximal  we add $(S_I' \cup IF_I') \times CC_I'$ to the set of maximal bicliques and let $next' = \infty$. If the blueprint is not maximal solely because of nodes from $O_{IF}'$ we also let $next' = \infty$, but if it is not maximal because of a node from $S_W'$ we let $next$ be the first such node from $S_W'$. The blueprint remains in $\childi$ in either case. Once again we continue the \phasestyle{expansion phase} until all bags have been branched upon.

\subsection{Correctness \& Complexity: \OCTMIB}\label{sec:octmibproofs}

We now provide proofs of the correctness and asymptotic complexity of \OCTMIB, which were originally stated in Section \ref{sec:theory}.

\octmibcorrect*

\begin{proof}
Suppose that there is a MIB $\biclique$ that \OCTMIB does not find. If there are no nodes from $O$ in $\bicleft \cup \bicright$ then the biclique would be found in the bipartite phase, thus we may assume there is a node from $O$ in the biclique.

Without loss of generality we may assume that $\bicleft$ contains OCT-nodes and that if $\bicleft$ contains nodes from $L$ or $R$ then $\bicright$ contains nodes from the other. We let $\bicleft^O = \bicleft \cap O$ and $\bicleft^{L,R} = \bicleft \setminus \bicleft^O$. Note that $\bicleft^O$ must be contained in some MIS $S$ in $O$. Let $a_1^O$ be the first node in $\bicleft^O$ with respect to $\order$; we show that when we initialize with $a_1^O$, $\bicright \subseteq CC_I$ and $\bicleft^{L,R} \subseteq IF_I$ for some blueprint in $r_{a_1^O}$. $\bicleft^{L,R}$ is either empty or contains nodes from one of $\{L, R\}$, and we show that a candidate blueprint as described above is created in both cases.

By definition, $\bicright$ must be contained within an MIS $I_B$ in $N(a_1^O)$.
If $\bicleft^{L,R}$ is empty and a candidate blueprint with $\bicright \subseteq CC_I$ is not created in the first round, then there must be nodes in  $\overline{N}_L(a_1^O)$ or $\overline{N}_R(a_1^O)$ which are in $\ccto(I_B)$. If a node $\bar{l}$ from $\overline{N}_L(a_1^O)$ is in $\ccto(I_B)$ then $\bar{l} \times I_B$ is a crossing biclique in the instance of \mcb called in the second round.
Thus a maximal crossing biclique which contains $I_B$ is returned and a candidate blueprint with $\bicright \subseteq CC_I$ is created. Otherwise a node $\bar{r}$ from $\overline{N}_R(a_1^O)$ is in $\ccto(I_B)$ and $\bar{r} \times I_B$ is a crossing biclique in the instance of \mcb called in the third round, leading to the creation of a candidate blueprint with $\bicright \subseteq CC_I$.

If $\bicleft^{L,R} \neq \emptyset$, assume without loss of generality that $\bicleft^{L,R} \subseteq L$. Then there is a crossing biclique $\bicleft^{L,R} \times \bicright$ in the instance of \mcb called in the second round, and a candidate blueprint with $\bicright \subseteq CC_I$ and $\bicleft^{L,R} \subseteq IF_I$ is created.

If a candidate blueprint with $\bicright \subseteq CC_I$ and $\bicleft^{L,R} \subseteq IF_I$ is pruned away because of an $S_P$ node, then $\biclique$ is not a maximal biclique. So there exists a blueprint $P^*$ in $r_{a_1^O}$ where $\bicright \subseteq CC_I$ and $\bicleft^{L,R} \subseteq IF_I$.
Expanding $P^*$ so that $S_I = \bicleft$ would yield the biclique $\biclique$ as $(S_I \cup IF_I) \times CC_I$. Thus as long as we expand for each node in $\bicleft$ we will find the biclique $\biclique$.

Assume that an expansion with a node in $\bicleft$ is not made and let $a^O_k$ be the first such node.
The blueprint would not have been discarded because of $CC_I'$ being empty or a node from $S_P'$  or $O'_{IF}$ being in $\ccto(CC_I') \cap \indfrom(IF_I)$, as this would imply $\biclique$ is not a MIB. If a node from
from $CC_O'$ or $IF_O'$ made $P^*$ not near-maximal
when expanding with the node prior to $a^O_k$ in $\bicleft$, $a^O_{k-1}$,
consider a blueprint $Q^+$ formed by
adding nodes from $IF_O'$ to $IF_I'$ and $CC_O'$ to $CC_I'$ such that $(S_I' \cup IF_I') \times CC_I'$ forms a future-maximal biclique.
This blueprint must exist at
the bag created by expanding with $a^O_{k-1}$ and it contains $\bicright$ in $CC_I$ and $\bicleft^{L,R}$ in $IF_I$. We now let that blueprint be $\blueprintspc$.

Thus $a^O_k$ must be greater than $next$ for blueprint $\blueprintspc$ at the bag with $S_I = a^O_1, \dots, a^O_{k-1}$. Because of how we constructed $next$, it is not in $S_I$. Therefore $next$ is not in $\bicleft$ but because $X$ is an independent set, it is independent from $\bicleft$. Furthermore $next$ is completely connected to $\bicright$ because it is completely connected to $CC_I$ and $\bicright \subseteq CC_I$. Thus $\biclique$ would not be a maximal biclique and we obtain a contradiction. We can apply this argument inductively to show the complete correctness of the algorithm.

\end{proof}

\octmibcomplexity*

\begin{proof}

We note that $G$ has no isolates, therefore $n_L$ and $n_R$ are $O(m)$.
We first compute the complexity of finding a maximal biclique when iterating over a single MIS $S$ in OCT.
In the initialization phase, finding the MISs in round one takes $O(mn)$ time per MIS.
Finding the MCBs in rounds two and three takes $O(mnn_O)$ per MCB. This is because the arboricity of $Y$ in each call to \mcb is $O(n_O)$.
Therefore the time it takes to initialize a blueprint is $O(mn_O)$ not $O(mn)$~\cite{CHIBA}.
Because of how we have utilized $next$, each node in $S$ is expanded with $O(1)$ times per MIB.
Each expansion can be done in $O(m)$ time, and thus the total time spent expanding is $O(mn_O)$ per maximal biclique.
Since we only expand on blueprints which are future-maximal, every expansion is accounted for.

There may be an additional $O(n_O)$ initializations of a blueprint.
Thus the total complexity of finding a single maximal biclique when iterating over $S$ is $O(mnn_O^2)$.
A biclique can be found once per MIS in OCT, and thus the total complexity of \OCTMIB is $O(Mmnn_O^2I_O)$, where $M$ is the number of maximal bicliques and $I_O$ is the number of MIS's in OCT.

The space complexity for storing a single blueprint is $O(n)$ and since all blueprints which are stored are future-maximal there is one blueprint per MIB stored at any given time. 

\end{proof}

\section{\mcb Algorithm}\label{app:mcb}
\begin{figure*}
    \includegraphics[width=0.5\textwidth]{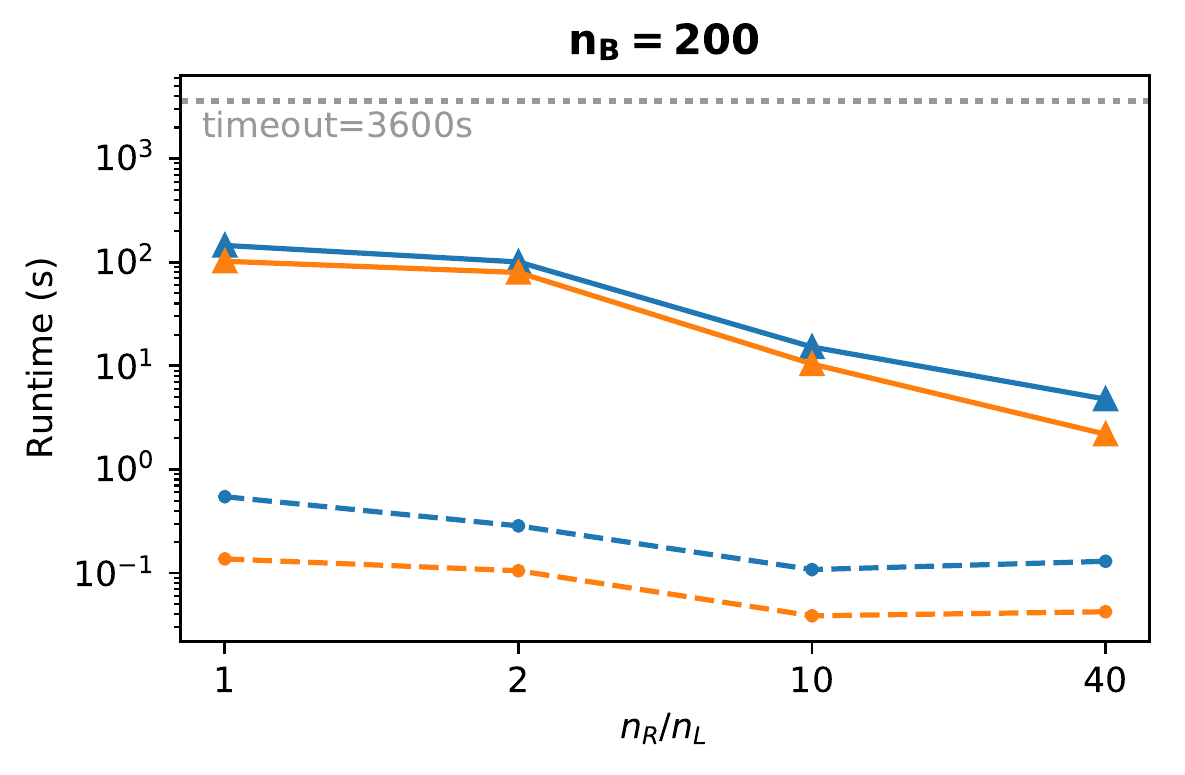}
    \includegraphics[width=0.5\textwidth]{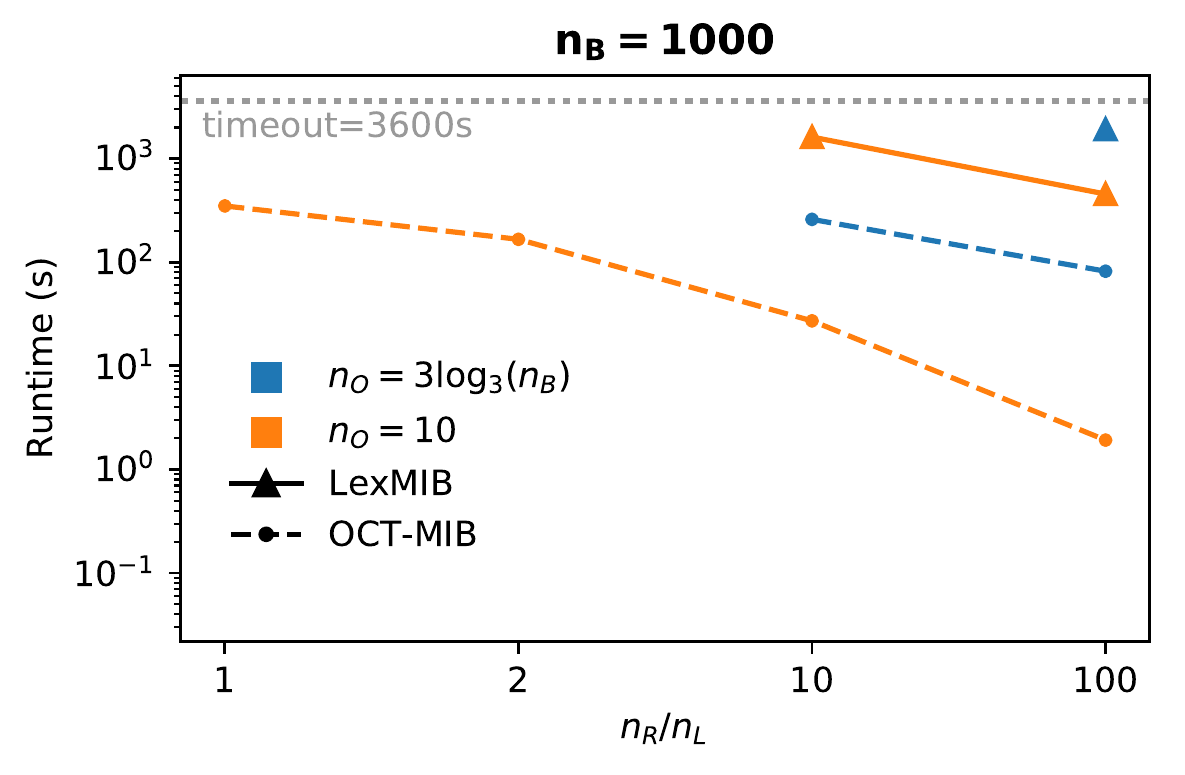}%
    \caption{\label{fig:balance-small}
        Runtimes of \OCTMIB and \moddias under varied bipartite balance conditions.
        For $n_B = 200$ (\textit{Left}, and $n_B = 1000$ (\textit{Right}), each curve represents the runtime in seconds of an algorithm on graphs with a given OCT size and varied balance. See the \textbf{Extended Results: Size/Balance} paragraph for details on timeouts.}
\end{figure*}

In this section we present the details of our algorithm \mcb for finding all maximal crossing bicliques, along with proofs of correctness and runtime complexity.

\subsection*{\MCB Description}\label{sec:mcb}
Given an instance $(G,X,Y)$, \mcb enumerates all maximal crossing bicliques. \mcb makes use of the same checks as \OCTMIB as described in Section~\ref{sec:octmib}.

\noindent\phasestyle{Initialization phase.} Recall that $S = X$ in \mcb. To begin this phase, we fix an order of $S$, $v_1, \dots, v_{|X|}$. For each $v_i$ we create a new  bag $\rooti$ which will hold blueprints.  We let $S_I = \{v_i\}$, $S_W = \{v_{i+1}, \dots, v_{|X|}\}$, $S_P = \{v_{1}, \dots, v_{i-1}\}$. We then find all maximal independent sets in  $G[N(v_i)]$ in $O(|Y|m)$ time per \emph{MIS} \cite{TSUKIYAMA}.
For each \emph{MIS} $I$ we create a blueprint where $CC_I = I$, $CC_O = N(v_i) \setminus I$, and $S_I$, $S_W$, and $S_P$ are defined as above. In \mcb $IF_I$, $IF_O$, and $O_{IF}$ are not used in any blueprints.

For each blueprint, if it is not future-maximal, we discard it; otherwise, we add it to $\rooti$. If the blueprint is maximal we add $S_I \times CC_I$ to the set of maximal crossing bicliques and let $next = \infty$. If it is not maximal we let $next$ be the first node from $S_W$ which is completely connected to $CC_I$. The blueprint remains in $\rooti$ in either case. As long as $\rooti$ is non-empty we add it to $T$, the set of bags to be processed in the \phasestyle{expansion phase}.

\noindent\phasestyle{Expansion phase.} Once we have initialized with each $v_i$, we process bags from $T$ until it is empty. We refer to branching in the same manner as in Section~\ref{sec:octmib}. Note that $S_I, S_W,$ and $S_P$ are the same in all of the blueprints at bag $\bag$. We will branch on $\bag$ with all nodes in $S_W$, which we order $x_1, \dots, x_{|S_W|}$ to be consistent with the order of $S$. Let $x_i$ be the node we are currently branching on $\bag$ with and $\childi$ be a bag we create to hold new blueprints formed by expanding with $x_i$.
When branching with $x_i$, we iterate over the blueprints in $\bag$ and expand on them one at a time.
Whether an expansion is terminated or completed, we continue with expanding the next blueprint in $\bag$.

Let $\blueprint = (S_I,~CC_I,~S_W,~S_P,~CC_O,~next)$ be the blueprint currently being expanded on.
If we determine the expansion will not yield a future-maximal blueprint, we terminate.
Otherwise we will create a new blueprint $\blueprint'$ with values $(S_I',~CC_I',~S_W',~S_P',~CC_O',~next')$ as follows. We let $S_I' = S_I \cup \{x_i\}$, $CC_I' = CC_I \cap N(x_i)$, $S_W' = \{x_{i+1}, \dots, x_{|S_W|}\}$, $S_P' = S_P \cup \{x_1, \dots, x_{i-1}\}$, and $CC_O' = CC_O \cap N(x_i)$.
If $S_I' \times CC_I'$ is invalid or it is not future-maximal we terminate this expansion.

We  must consider the case where two $CC_I'$'s from different blueprints are identical after expansion. To handle this we maintain a hashtable for the current $x_i$ being branched with, and hash each $CC_I'$, terminating if we find a conflict. This hashtable can be discarded once we finish branching with the current node.

The blueprint $\blueprint'$ is future-maximal if it has not been discarded, so we add it to $\childi$. If the blueprint is maximal we add $S_I' \times CC_I'$ to the set of maximal crossing bicliques and let $next = \infty$. If it is not maximal we let $next$ be the first node from $S_W'$ which is completely connected to $CC_I'$. $P'$ remains in $\childi$ in either case. Once we have expanded on all blueprints from $\bag$ we add $\childi$ to $T$ if it is non-empty, continuing to branch on any bag in $T$ that has not yet been branched on.

\subsection{Correctness \& Complexity: \MCB}
We now establish the correctness and asymptotic complexity of \MCB.

\begin{theorem}\label{thm:mcb-correctness}
\mcb finds all maximal crossing bicliques.
\end{theorem}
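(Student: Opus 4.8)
The plan is to fix an arbitrary maximal crossing biclique $A \times B$ (so $A \subseteq X$ and $B \subseteq Y$ are independent, $A$ automatically since $X$ is independent) and show it is output, by tracking how a blueprint representing it is assembled through the \phasestyle{initialization} and \phasestyle{expansion phases}. Order the vertices of $A$ as $a_1, \dots, a_k$ with $\order(a_1) < \cdots < \order(a_k)$, write $A_j = \{a_1, \dots, a_j\}$, and set $D_j := \ccto(A_j) \cap Y$, the $Y$-vertices completely connected to $A_j$, so that $B \subseteq D_k \subseteq \cdots \subseteq D_1 = N(a_1)$. The central claim I would prove by induction on $j$ is: \emph{for every maximal independent subset $M$ of $G[D_j]$ with $B \subseteq M$, the algorithm creates a future-maximal blueprint with $S_I = A_j$ and $CC_I = M$.} Instantiating this at $j = k$ with $M = B$ (a legal choice, as shown below) yields a blueprint with $S_I = A$, $CC_I = B$ that is detected maximal, hence $A \times B$ is output.

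First I would record the consequences of maximality of $A\times B$ that make every such blueprint survive pruning. Since $IF_I = \emptyset$ throughout \mcb, we have $\indfrom(IF_I)=\ccto(IF_I)=V$, so future-maximality reduces to the failure of conditions (i) $S_P \cap \ccto(CC_I) \neq \emptyset$ and (ii) $CC_O \cap \indfrom(CC_I) \neq \emptyset$ (condition (iii) is vacuous as $IF_O=\emptyset$). For a blueprint with $S_I = A_j$ and $CC_I = M \supseteq B$: any $p \in S_P \cap \ccto(M)$ would be an $X$-vertex outside $A$ — no element of $A$ lies in $S_P$ — completely connected to $B$, so $(A \cup \{p\}) \times B$ would contradict maximality, so (i) fails; any $c \in CC_O \cap \indfrom(M)$ would be a $Y$-vertex in $D_j \setminus M$ independent from $M$, contradicting maximality of $M$ in $G[D_j]$, so (ii) fails. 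The same two facts show that $B$ itself is a maximal independent subset of $G[D_k]$ (otherwise it extends on the $Y$-side) and that, for the blueprint with $S_I = A$, $CC_I = B$, the maximality check $S_W \cap \ccto(CC_I) = \emptyset$ succeeds (any witness in $S_W$ would be an $X$-vertex of order past $a_k$, hence outside $A$, completely connected to $B$).

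Then I would run the induction. In the base case $j = 1$ we have $D_1 = N(a_1)$, and \mis enumerates every maximal independent set of $G[N(a_1)]$, so each $M \supseteq B$ is created with $CC_I = M$ and, by the paragraph above, retained in $\rooti$. For the step $j \to j+1$, given a maximal independent subset $M'$ of $G[D_{j+1}]$ with $B \subseteq M'$, I would extend it to a maximal independent subset $\tilde M$ of $G[D_j]$; by the inductive hypothesis a future-maximal blueprint with $S_I = A_j$, $CC_I = \tilde M$ exists, so it sits in a non-empty bag that is added to $T$ and branched on with every vertex of its $S_W$, in particular with $a_{j+1} \in S_W$. I would verify this expansion is not terminated: it is valid ($CC_I' \supseteq B \neq \emptyset$); it is future-maximal by the argument above; and it is not cut off by the $next$-test, since $next$ is the $\order$-least $S_W$-vertex completely connected to $\tilde M \supseteq B$, and any such vertex with order below $a_{j+1}$ would yield an $X$-extension of $A \times B$, so $\order(a_{j+1}) \le \order(next)$. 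A short computation gives $CC_I' = \tilde M \cap N(a_{j+1}) = M'$ (using $M' \subseteq \tilde M \cap D_{j+1}$ and maximality of $M'$ in $G[D_{j+1}]$), and since all blueprints in a bag share $S_I$, the per-node hash table only ever discards a duplicate of an already-present $CC_I' = M'$; thus a future-maximal blueprint with $S_I = A_{j+1}$, $CC_I = M'$ is created, completing the induction.

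The main obstacle is precisely this inductive step: expansion only intersects $CC_I$ with $N(a_{j+1})$ and never re-maximizes, so the child of a tracked maximal independent set need not be maximal in $G[D_{j+1}]$ and could be discarded as non-near-maximal. The remedy — the technical heart of the argument, mirroring the completion step in the proof of Theorem~\ref{thm:octmibcorrect} — is to state the invariant over \emph{all} maximal independent subsets and to reach a target $M'$ from its extension $\tilde M$ rather than following one chosen chain. Confirming the identity $\tilde M \cap N(a_{j+1}) = M'$ and checking that the $next$-based early termination never skips a needed expansion are the points that will require the most care.
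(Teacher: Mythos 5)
Your proof is correct and follows essentially the same route as the paper's: initialize so that $B$ lies inside some MIS of $N(a_1)$, use maximality of $A \times B$ to rule out every way a tracked blueprint could be pruned (empty $CC_I'$, an $S_P$ witness, a $CC_O$ witness, or the $next$-test), and conclude with the final maximality check outputting $A \times B$. The only real difference is organizational: the paper argues by contradiction and ``repairs'' the tracked blueprint (the $Q^+$ step) when a $CC_O$ node breaks near-maximality, whereas your induction quantified over \emph{all} maximal independent supersets of $B$ in $G[D_j]$ reaches the target $M'$ backward through its extension $\tilde{M}$ --- a cleaner formalization of precisely the step the paper justifies only with ``this blueprint must exist at the bag \dots we can apply this argument inductively.''
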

\begin{proof}
Suppose there is a maximal crossing biclique $\biclique$, $\bicleft  \subseteq S=X$, $\bicright \subseteq Y$ that our algorithm does not find.
Consider the ordering of $\bicleft$ which is consistent with the ordering of $S$ that we fixed upon initialization and let $a_1$ be the first node. Upon initialization $\bicright$ must be contained in an MIS in $a_1$'s neighborhood. Let $\blueprintspc$ be a blueprint which has this property. Applying the expansions which lead to $S_I = A$ to blueprint $\blueprintspc$ yields $\biclique$ as  $S_I \times CC_I$. Thus we must show that each of these expansions are made for some $P^*$.

Assume that an expansion with a node in $\bicleft$ is not made and let $a_k$ be the first such node.
The blueprint would not have been discarded because of $CC_I'$ being empty or a node from $S_P'$ being completely connected to $CC_I'$, as this would imply $\biclique$ is not a maximal crossing biclique. If a node from $CC_O'$ was completely independent from $CC_I'$ when expanding with the node prior to $a_k$ in $\bicleft$, $a_{k-1}$,
consider a blueprint $Q^+$ formed by adding a maximal independent set of the $CC_O'$ nodes which can be added to $CC_I'$ in $Q$. This blueprint must exist at
the bag created by expanding with $a_{k-1}$ and it contains $\bicright$ in $CC_I$. We now let $Q^+$ be $\blueprintspc$.

Thus $a_k$ must be greater than $next$, $x$, for blueprint $\blueprintspc$ at the bag with $S_I = a_1, \dots, a_{k-1}$. Because of how we constructed $next$, we know that $x$ is not in $S_I$. Therefore $x$ is not in $\bicleft$ but because $X$ is an independent set, it is independent from $\bicleft$. Furthermore $x$ is completely connected to $\bicright$ because it is completely connected to $CC_I$ and $\bicright \subseteq CC_I$. Thus $\biclique$ would not be a maximal biclique and we obtain a contradiction. We can apply this argument inductively to show the complete correctness of the algorithm.
\end{proof}


\begin{theorem}\label{thm:mcb-complexity}
\mcb runs in $O(M|X||Y|m)$  on the instance $(G,X,Y)$,
where $M$ is the number of maximal crossing bicliques. The space complexity of \mcb is $O(Mn)$.
\end{theorem}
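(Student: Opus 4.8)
The plan is to specialize the accounting used in the proof of Theorem~\ref{thm:octmibcomplexity} to the simpler structure of \mcb, which has no bipartite phase, uses $S = X$ directly in place of a maximal independent set of $O$, and maintains blueprints with empty $IF_I$, $IF_O$, and $O_{IF}$. First I would record the per-operation costs: each of the future-maximality, maximality, and validity checks, as well as each expansion $\blueprint \mapsto \blueprint'$, touches only neighborhoods and the sets stored in a blueprint, so each costs $O(m)$; and by~\cite{TSUKIYAMA}, enumerating the maximal independent sets of $G[N(v_i)]$ costs $O(|Y|m)$ per set, since $N(v_i) \subseteq Y$. The total runtime then splits as (initialization cost) $+$ (expansion cost), and I would show the initialization term is $O(M|X||Y|m)$ and dominates, while the expansion term is only $O(M|X|m)$.

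For the \phasestyle{initialization phase}, the crux is to bound $\sum_{v_i \in X}\bigl|\mis(G[N(v_i)])\bigr|$. I would argue this sum is $O(M|X|)$ by charging each initial blueprint to a distinct MCB. Fix $v_i$ and a maximal independent set $I \subseteq N(v_i)$. Since $X$ is independent and $I \subseteq N(v_i)$, the set $A_I := \ccto(I) \cap X$ contains $v_i$, and $A_I \times I$ is a maximal crossing biclique: its $X$-side is by construction every $X$-node completely connected to $I$, and $I$ is maximal as an independent set inside $N(v_i) \supseteq \ccto(A_I)\cap Y$, hence also inside $\ccto(A_I)\cap Y$. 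Distinct maximal independent sets $I \neq I'$ of $N(v_i)$ give crossing bicliques with distinct $Y$-sides, and therefore distinct MCBs; thus $|\mis(G[N(v_i)])|$ is at most the number of MCBs containing $v_i$ in their $X$-side. Summing over $v_i$ gives $\sum_{v_i}|\mis(G[N(v_i)])| \le \sum_{\biclique \text{ an MCB}} |\bicleft| \le M|X|$. Multiplying by the $O(|Y|m)$ cost per set yields $O(M|X||Y|m)$ initialization time.

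For the \phasestyle{expansion phase} I would reuse the argument behind Theorems~\ref{thm:octmibcorrect} and~\ref{thm:octmibcomplexity}: every blueprint that survives is future-maximal and hence, by the correctness proof (Theorem~\ref{thm:mcb-correctness}), completes to an MCB by appending exactly the $\order$-later $S$-nodes in $\ccto(CC_I)$. The $next$ pointer lets each branch reject in $O(1)$ any expansion node ordered after $next$ that cannot produce a new future-maximal blueprint, and the per-node hashtable on $CC_I'$ suppresses duplicate children; together these ensure each $S$-node is the subject of only $O(1)$ genuine (cost-$O(m)$) expansions per MCB, for $O(M|X|m)$ expansion time overall. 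Adding the two phases gives $O(M|X||Y|m)$, with initialization dominating. For the space bound I would note that only future-maximal blueprints are stored, and that, as in Theorem~\ref{thm:octmibcomplexity}, the blueprints resident at any instant inject into the MCBs, so $O(M)$ blueprints of size $O(n)$ are stored, giving $O(Mn)$. The main obstacle is the expansion count: making rigorous the claim that $next$, future-maximality, and the $CC_I'$-hashtable jointly limit the genuine expansions charged to each MCB to $O(1)$ per $S$-node---that is, that the branching tree neither revisits a biclique nor spawns a growing number of redundant prefixes---which is exactly the delicate bookkeeping formalized for \OCTMIB and must be re-verified here in the absence of the $IF$ sets.
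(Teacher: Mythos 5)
Your proposal is correct, and its overall accounting matches the paper's proof: $O(|Y|m)$ per initialized blueprint via \cite{TSUKIYAMA}, $O(m)$ per expansion and per maximality/future-maximality check, $O(|X|)$ charged initializations and $O(|X|)$ charged expansions per maximal crossing biclique, and the space bound from storing only future-maximal blueprints, one per MCB at a time. The one genuinely different ingredient is your treatment of the initialization count, and it is a strengthening: the paper simply asserts that ``there may be an additional $O(|X|)$ initializations of a blueprint'' per MCB, whereas you prove the bound by an explicit injection---each MIS $I$ of $G[N(v_i)]$ yields the crossing biclique $(\ccto(I)\cap X)\times I$, which is maximal because $X$-side additions are excluded by the definition of $\ccto(I)$, while any admissible $Y$-side addition would lie in $N(v_i)$ (since $v_i\in\ccto(I)\cap X$) and hence contradict maximality of $I$ in $G[N(v_i)]$; distinct MISs give distinct $Y$-sides, so $\sum_{v_i}\bigl|\mis(G[N(v_i)])\bigr|\le \sum_{\biclique\ \mathrm{an\ MCB}}|\bicleft|\le M|X|$. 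This charging lemma is cleaner and more rigorous than the paper's bare assertion and would be a worthwhile substitution. On the expansion phase and the space bound, you assert essentially the same claims the paper does (each $X$-node expanded $O(1)$ times per MCB via $next$ plus the per-node hashtable; one resident blueprint per MCB), and at essentially the same level of rigor: the paper's own justification is a brief ``ancestor bag'' sketch---trace the found blueprint back to its closest maximal ancestor $A_F$ and argue each $X$-node is expanded at most once along that chain---so the obstacle you flag at the end is real, but it is not resolved in any greater detail by the paper's proof either.
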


\begin{proof}
First note that if a node in $X$ has no neighbors in $Y$ it will not be in any biclique so we can delete it, and vice versa. Therefore $|X|$ and $|Y|$ are $O(m)$.

We view the complexity of our algorithm through the lens of the amount of time it takes to find each maximal crossing biclique. First note that the time it takes to initialize a single blueprint and check it for maximality is $O(|Y|m)$, while the time it takes to expand a blueprint is $O(m)$. Now consider a maximal blueprint $P$  which is found via expanding from a bag $H$. The bag $H$ is formed by branching on a series of other bags, $H$'s \emph{ancestors}; $A_1, \dots, A_t$. One could trace the expansions which led to $P$ to find the corresponding blueprint in each ancestor bag.  Let $A_F$ be the closest ancestor bag of $H$ where the corresponding blueprint is maximal if such a bag exists and let $A_F = A_1$ otherwise. Because of how we have utilized $next$, for each node $v$ in $X$ there is at most one total expansion of a corresponding blueprint from a bag in
$\{A_F, \dots, A_t\}$.
Thus the total time spent expanding is $O(|X|m)$ per blueprint.
Furthermore because we only expand future maximal blueprints every expansion is accounted for.

We note that there may be an additional $O(|X|)$ initializations of a blueprint, which impacts the complexity. Thus the total complexity of finding a single maximal crossing biclique is $O(|X||Y|m)$.

The size of each blueprint is $O(n)$ and there is one blueprint stored per MIB at a given time, thus the space complexity of \mcb is $O(Mn)$.
\end{proof}

\section{Data and Infrastructure}\label{app:data}

\paragraph{Random Graphs}
Our random graph generator is based on the bipartite generator\footnote{The authors of~\cite{ZHANG} generously provided source code for the generator used in their paper.} used in ~\cite{ZHANG} and requires the user to specify $n_L$, $n_R$, and $n_O$, the expected edge densities between $L$ and $R$, $O$ and $\{L,R\}$ and within $O$, as well as the coefficient of variance (cv = standard deviation divided by mean) of the degrees of the smaller partition of $L$ and $R$. We also allow the generator to take in a seed for random number generation.

To add the edges between $L$ and $R$, the edge density and cv values are used to assign vertex degrees to the smaller partition of $\{L, R\}$, and then neighbors are selected from the other partition uniformly at random; this was implemented in the generator of~\cite{ZHANG}.
Edges are added between $O$ and $\{L, R\}$ via the same process, only with the corresponding edge density and cv values.  Finally, we add edges within $O$ with an Erd\H{o}s-R\'enyi process; where the edge probabilities correspond to the expected densities (no cv value is used here).

\paragraph{Hardware}
All experiments were run on identical hardware; each server had four Intel Xeon E5-2623 v3 CPUs (3.00GHz)
and 64GB DDR4 memory. The servers ran Fedora 27 with Linux kernel 4.16.7-200.fc27.x86\_64.
The C/C++ codes were compiled using gcc/g++ 7.3.1.

\paragraph{Extended Results: Size/Balance}
To evaluate the effect of $n_B$ and $b$, we generated 5 instances for all pairwise combinations of $n_B \in \{200, 1000\}$ and $b \in \{$1:1, 1:2, 1:10, 1:100$\}$ and ran both \OCTMIB and \moddias, see Figure~\ref{fig:balance-small}. When $n_B$ was 1000, \OCTMIB timed out ($3600s$) on 90\% of instances with balance 1:1 and 1:2. \moddias timed out on 100\% of instances at these settings, as well as 70\% of those with balance 1:10. Larger timeouts are needed to fully understand the performance in these settings.\\


\end{document}